\newcommand{\removelatexerror}{\let\@latex@error\@gobble}
\newtheorem{theorem}{Theorem}
\newtheorem{corollary}{Corollary}
\newcolumntype{P}[1]{>{\raggedright\arraybackslash}p{#1}}
\newcolumntype{L}[1]{>{\raggedright\let\newline\\\arraybackslash\hspace{0pt}}m{#1}}
\newcolumntype{C}[1]{>{\centering\let\newline\\\arraybackslash\hspace{0pt}}m{#1}}
\newcolumntype{R}[1]{>{\raggedleft\let\newline\\\arraybackslash\hspace{0pt}}m{#1}}
\newcommand{\NM}{(M+N')}
\newcommand{\N}{\mathbb{N}}
\newcommand{\nodes}{\mathcal{N}}
\newcommand{\lines}{\mathcal{L}}
\newcommand{\meters}{\mathcal{M}}
\newcommand{\owners}{\mathcal{O}}
\newcommand{\nb}{{N}_{\textsc{nb}}}
\newcommand{\measurements}{\mathcal{X}}
\newcommand{\vecx}{\vec{x}}
\newcommand{\vece}{\vec{\epsilon}}
\newcommand{\vect}{\vec{\theta}}
\newcommand{\matb}{G}
\newcommand{\largeconstant}{\kappa}
\newcommand{\initdeposit}{\gamma}
\newcommand{\timeoutblocks}{\tau^*}
\newcommand{\slotduration}{\eta}
\newcommand{\timeout}{\tau}
\newcommand{\reward}{\rho}
\newcommand{\misspen}{\phi}
\newcommand{\anompen}{\phi'}
\newcommand{\missprob}{p}
\newcommand{\anomprob}{p'}
\newcommand{\othersprob}{q}
\newcommand{\anomthres}{\theta}
\newcommand{\rtot}{R_o^{\textsc{tot}}}
\newcommand{\roth}{R_o^{\textsc{oth}}}
\newcommand{\phitot}{\Phi_o^{\textsc{tot}}}
\newcommand{\phioth}{\Phi_o^{\textsc{oth}}}
\newcommand{\phip}{\Phi_o'}
\newcommand{\rnet}{R_o^{\textsc{net}}}
\renewcommand{\i}{{\bf 1}}
\newcommand{\iadd}{\i^{\textsc{add}}}
\newcommand{\imiss}{\i^{\textsc{miss}}}
\newcommand{\iiadd}{{\bf I}}
\newcommand{\iimiss}{{\bf I}}
\newcommand{\distortmeters}{\mathcal{D}}
\newcommand{\maker}{{\bf M}}
\newcommand{\makerentry}{{\bf m}}
\begin{document}

\author{Dani\"el Reijsbergen, Aung Maw, Tien Tuan Anh Dinh, Wen-Tai Li, and Chau Yuen}
\affiliation{Singapore University of Technology and Design}

\title{Securing Smart Grids Through an Incentive Mechanism for Blockchain-Based Data Sharing}

\begin{abstract}
Smart grids leverage the data collected from smart meters to make important operational decisions. However, they are vulnerable to False Data Injection (FDI) attacks in which an attacker manipulates meter data to disrupt the grid operations. Existing works on FDI are based on a simple threat model in which a single grid operator has access to all the data, and only some meters can be compromised. 

Our goal is to secure smart grids against FDI under a realistic threat model. To this end, we present a threat model in which there are multiple operators, each with a partial view of the grid, and each can be fully compromised. An effective defense against FDI in this setting is to share data between the operators. However, the main challenge here is to incentivize data sharing. We address this by proposing an incentive mechanism that rewards operators for uploading data, but penalizes them if the data is missing or anomalous. We derive formal conditions under which our incentive mechanism is provably secure against operators who withhold or distort measurement data for profit. We then implement the data sharing solution on a private blockchain, introducing several optimizations that overcome the inherent performance limitations of the blockchain. Finally, we conduct an experimental evaluation that demonstrates that our implementation has practical performance.  
\end{abstract}

\maketitle

\section{Introduction}
\label{sec:intro}
Meters in modern power systems are increasingly augmented with automated data collection and communication
capabilities, leading to the emergence of smart grids.  The high-frequency measurements gathered by
these \textit{smart meters} allow for unprecedented levels of feedback and control.  However, as utility companies
become more reliant on smart meter data for grid management, the meters and their data become more attractive
targets for attackers. As a critical infrastructure, power systems are at risk from, e.g., hackers seeking to
extort utility companies \cite{ukattack}, or state-sponsored adversaries seeking to disrupt a power grid for
political reasons \cite{ukraine2016analysis}.  

One important attack against smart power grids is False Data Injection
(FDI)~\cite{bobba2010detecting,lakshminarayana2020data,lakshminarayana2017optimal,liu2011false,liu2015collaborative,wang2019online}.
In an FDI attack, the measurements collected by the meter during normal operation are replaced by arbitrary
measurements, which violates data integrity.  Existing approaches against FDI include redundancy through the
strategic placement of additional meters
\cite{chen2006placement,rakpenthai2006optimal,denegri2002security,yang2017optimal}, improving the security of
the existing meters \cite{jamei2016micro,mazloomzadeh2015empirical}, and robust detection algorithms
\cite{wang2019online}.  However, these works consider a simple threat model in which there is a single operator
who has access to all the measurement data. Furthermore, they assume that the attackers are limited in the
number of meters that they can compromise, or in the magnitude with which they can distort measurements.  

We seek a solution for protecting smart grids against FDI under a realistic threat model. We
note that the model used in existing works precludes attacks that overwrite the data from \textit{all} of a single operator's meters.  However, such
an attack cannot be ruled out, e.g., if the
storage servers at the operator are compromised by a malicious insider or after a successful hack.  In this paper, we consider a more realistic model, in which there are  multiple
operators, and each of them only has a partial view of the grid. Each operator can be fully compromised, i.e., all of
its meters and its data management servers can be under the attacker's control.  

Under this new threat model, we propose a solution for defending against FDI in which operators form a
\textit{consortium} to share their data. In particular, the operators first agree on a high-level {system topology} and
 indicate the location of their meters within this topology. They then upload meter measurements to a shared, tamper-evident data structure --
e.g., a blockchain. The operators have access to the full measurement dataset, and can therefore run
  their own FDI detection algorithms. We focus on meters on the higher levels of the grid -- i.e., the distribution and transmission stages -- as they have the following advantages over household meters: 1) user privacy is less of a concern because each high-level measurement aggregates many households, 2) high-level meters perform and share measurements at a higher frequency than household meters, and 3) the number of high-level meters in a regional or national network is typically in the order of thousands (e.g., see the IEEE test cases for Poland and the Western USA \cite{ieeedataset}).

The main motivation for joining our data sharing solution is to defend against FDI attacks.
However, we note that convincing the operators to join the consortium is not
sufficient. The grid operators are competitors in the same market as they bid for the same contracts to operate sections of the grid, and one operator may therefore profit if the others do
poorly. This implies that the operators may be motivated to share bad or insufficient data.   
As an example, consider an operator who shares only a small fraction of its meters' data. If its goal is to
determine system-wide measurement averages, it can first obtain data from other operators, then combine these 
with its hidden measurements. This way, the operator achieves its goal while reducing the degree to which its
competitors benefit from the system. A malicious operator can even share distorted measurements in order to
mislead other operators. 

We address this incentivization challenge by proposing an \textit{incentive mechanism} in which operators make an initial financial deposit, and earn \textit{credits} when they share meaningful data. Moreover, they lose credits if they fail to share data or share anomalous data. An operator who loses all of its credits after incurring penalties over a long period (e.g., weeks) can be expelled from the consortium, causing it to lose its deposit. To identify anomalies, we
exploit the interdependence between measurements: in particular, a
measurement is flagged as \textit{anomalous} if it is found, through state estimation \cite{monticelli1999state,abur2004power}, to be inconsistent with the measurements from other upstream or
downstream meters. 
Our incentive mechanism helps operators to identify and fix benign problems that cause their
meters to report anomalous data. It also helps to identify bad operators who persistently share bad data. 
Furthermore, we  derive formal conditions under which our mechanism is provably \textit{incentive compatible}: i.e., it is the optimal strategy for (coalitions of) operators to share as many accurate measurements as possible unless certain technical conditions are violated. 
These conditions depend on the structure of the grid and the strength of the adversarial coalition.

We implement our data sharing solution on top of a private \textit{blockchain}. Blockchains allow us to implement credit transfers between parties in a secure, transparent, and fully automated manner on a network of mutually untrusting peers \cite{hassan2019blockchain}. Blockchains remove the need for a trusted third party, which itself may suffer from attacks or temporary losses of service that impact the incentive mechanism, leading to costly disputes. We implement the mechanism in
a \textit{smart contract} that performs state estimation, which requires expensive matrix computations. Since smart contract execution times
affect the blockchain's throughput, the challenge here is to \textit{speed up} such execution in order to avoid
throughput degradation. As a final contribution, we therefore introduce several performance improvements: data storage optimization, moving expensive computations offline, utilizing a randomized algorithm to verify offline computations \cite{freivalds1979fast,tramer2018slalom}, and multithreading. Our experiments show that our approach achieves practical performance.

In summary, we make the following four contributions: 
\begin{enumerate}
\item We present a threat model for FDI attacks that is broader and more realistic than the threat models in existing works because it includes the possibility that \textit{all} of a single operator's measurements are affected.

\item We present a data sharing solution that defends against FDI attacks in this threat model if other operators' meters perform redundant measurements.

\item Our data sharing solution features an incentive mechanism that is provably incentive-compatible, i.e., it motivates operators to share meaningful data, under explicit technical conditions that depend on the grid topology and the strength of the adversarial coalition.

\item We implement our solution on a private blockchain and introduce four optimizations that speed up storage
operations and matrix computation in the contract. We conduct an experimental evaluation of our implementation
and show that it achieves practical performance.  
\end{enumerate}

The rest of the paper is organized as follows. In Section~\ref{sec:background}, we review the relevant background
on smart grids and blockchains. In Section~\ref{sec:model}, we describe the system model, threat model, and the requirements for our solution.
We discuss our incentive mechanism in Section~\ref{sec:incentives} and the implementation of our data sharing solution in Section~\ref{sec:implementation}. 
We analyze our solution in Section~\ref{sec:analysis} and find that it satisfies the stated requirements. 
We present the empirical evaluation in Section~\ref{sec:experiments}, discuss related work in Section~\ref{sec:related}, and conclude the paper in Section~\ref{sec:conclusion}.  

\section{Background}
\label{sec:background}

\subsection{Power Grid}
Power grids are systems that deliver electrical power from power plants to consumers such as households and
businesses. Figure~\ref{fig:meter} shows the design of a grid which consists of four stages:
\textit{generation}, \textit{transmission}, \textit{distribution}, and \textit{consumption.} After generation,
power is transmitted over a network of high-voltage power lines, after which it is distributed to consumers
over medium- and low-voltage networks.  The power lines and nodes (junctions and substations) in the
transmission network are operated by \textit{Transmission System Operators} (TSOs).  Similarly, the power
lines and nodes in the distribution network are operated by \textit{Distribution System Operators} (DSOs).
Billing and customer service is
provided to consumers by \emph{retailers}.
\begin{figure}
\centering
\includegraphics[width=0.4\textwidth]{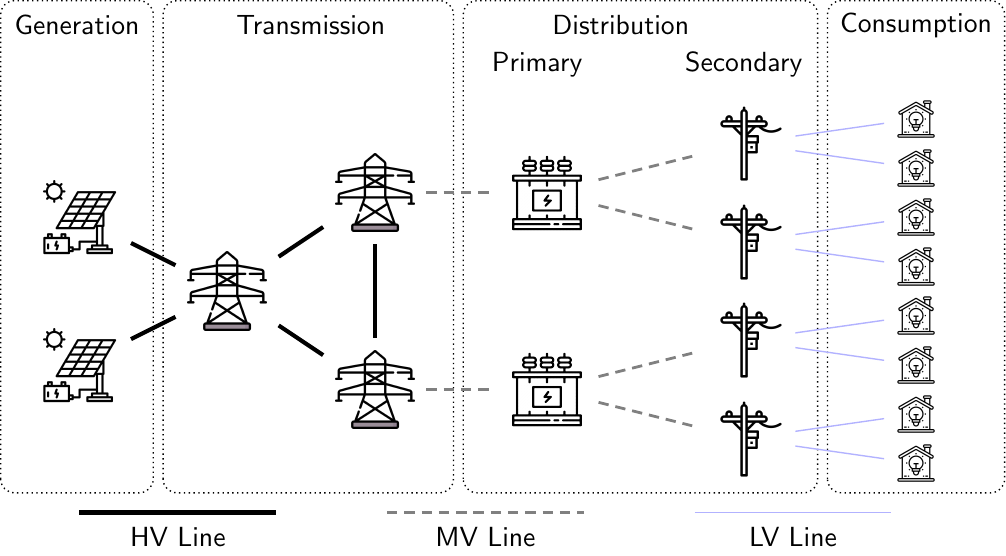}
\caption{High-level overview of the power grid.}
\label{fig:meter}
\end{figure}
The roles of different organizations in the grid vary from country to country.  For example, in Japan, there are 10 regions in which a single operator acts as a TSO and DSO \cite{tepco}. However, in Germany, the national grid is operated by four
TSOs, but the distribution networks are operated by hundreds of municipal utility companies
(\textit{Stadtwerke}) and some regional DSOs \cite{germangrid}. 

\begin{figure}
\centering
\includegraphics[width=0.4\textwidth]{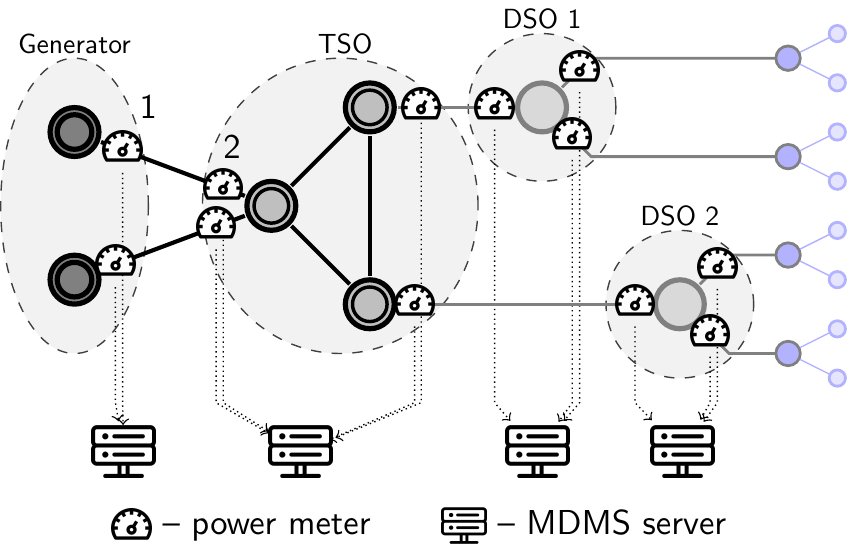}
\caption{Meters and data collection in the grid. Solid lines represent power lines, whereas dotted lines
indicate data transfers. There are four operators (one generator, one TSO, and two DSOs), and each has
its own MDMS to collect and process data from their meters. By sharing their data, the operators are able to
fully exploit the redundancy in the grid, e.g., between the two meters marked `1' and `2'.}
\label{fig:solution} \end{figure}

In each stage of the grid, the electricity going over the power lines is measured using \emph{meters}. In this
paper, we focus particularly on the meters at the \textit{higher levels} of the grid, that is, at the generation,
transmission, and distribution stages.  These meters are an essential component of the operator's Supervisory
Control And Data Acquisition (SCADA) system. The main goal of the SCADA system is to facilitate operations
management such as load balancing because excessively high loads can cause line failures and blackouts. SCADA
meters can measure a number of different quantities, for example, voltage, current, and frequency. However,
the main quantities of interest in our setting are the \textit{power flow} and \textit{injection} on lines and
nodes, respectively. In addition to meters on the power lines, grid operators can also deploy high-precision meters
called Phasor Measurement Units (PMUs) at the nodes. A PMU outputs high-frequency estimates of
voltage and current phasors ~\cite{jamei2016micro,rakpenthai2006optimal,wang2019online,li2015location}. 
The high-level meters all have advanced communication capabilities. Meter data is
collected, stored, and processed by the operator's Meter Data Management System (MDMS).

The data collected from the meters allows for the estimation of system states (i.e., voltage and current
phasors) at the nodes. In the following, we use the bus-branch model, in which the branches represent lines
and the buses represent nodes, in combination with the DC flow model
\cite{abur2004power,gomez2018electric,wood2013power,monticelli1999state}.  In this model, the relationship
between the measurements and the system states is linear. Furthermore, measurement inaccuracies -- e.g., due to different meters not performing measurements at exactly the same time -- are expressed as random variables with a known distribution (e.g., normal/Gaussian with known variance). 
To detect faults or attacks, we can then use \textit{linear regression}: we use the least-squares method to estimate the system states,
and then use the result to re-estimate the measurements. If the sum of the differences between the observed
and the estimated measurements, called the \textit{residuals}, is significantly higher than what we would expect from the probability distribution of the measurement errors, then this is evidence of a fault or an attack. 

Depending on the placement of the meters, the system can be \textit{observable} or
\textit{unobservable} \cite{abur2004power,bobba2010detecting}. The system is unobservable if the matrix used for
state estimation does not have full rank. Individual meters can be either \textit{critical} or
\textit{redundant}. A meter is critical if the following holds: if the meter is included then the
system is observable, but if the meter is removed then the system is unobservable. A meter that is not
critical is redundant. A residual that corresponds to a critical meter will always be zero, but those that correspond to a redundant meter will typically be non-zero. 
When a single redundant meter produces a bad measurement (i.e., different from the true quantity)
then this meter will have the largest residual \cite{abur2004power}. 

Figure~\ref{fig:solution} illustrates an example of different operators managing different sets of meters.
Meters 1 and 2 are redundant, as they are on the same line and are likely to produce similar measurements.
This redundancy in the grid can only be exploited if the operators share measurement data. Some countries, such as the USA, make it compulsory for
grid operators to share data with a regulator. There are even standard data formats for exchanging information
between grid operators.\footnote{\url{https://www.entsoe.eu/digital/common-information-model/\#common-information-model-cim-for-grid-models-exchange}} However, we do not assume that the regulator has an active role in monitoring grid measurements. The potential role of the regulator will be discussed in more detail in Section~\ref{sec:analysis}.

\subsection{Blockchains}
A blockchain is a distributed system consisting of mutually distrusting peers. The peers maintain some
global states which can be updated via \textit{transactions}. The blockchain groups transactions into \textit{blocks} -- each
block links to another block via a cryptographic hash pointer, forming a chain. A Merkle tree \cite{merkle1987digital} is built
on the global states, and the root of the tree is included in each block. The nodes run a consensus protocol
to ensure that each node has the same copy of the states.   

There are two main types of blockchains: public (or permissionless) and private (or
permissioned)~\cite{dinh2017blockbench}. The former type allows any node to join the network, and typically uses a variant of
 Proof-of-Work~\cite{bitcoin} for consensus. The latter is restricted to approved members, and
uses a classic Byzantine fault-tolerant consensus protocol such as PBFT~\cite{castro1999practical}. One popular private
blockchain is Hyperledger Fabric~\cite{hyperledger}. It supports a centralized \textit{membership service} that maintains a list of users
who can read and write to the blockchain. 

Many blockchain systems today support {\em smart contracts} which are pieces of code that run on the blockchain. The
smart contract can access the global states, and its execution is replicated at all nodes. Some blockchains
allow Turing-complete smart contracts. For example, Ethereum \cite{ethereum} has its own language and executes the code on the
Ethereum Virtual Machine (EVM). Hyperledger Fabric~\cite{hyperledger} runs its smart contracts inside a Docker container. 

\section{Model \& Requirements}
\label{sec:model}
\subsection{System Model}
The main entities in our system model are the grid operators, i.e., the TSOs and DSOs. We denote the set of operators by
$\owners = \{1,\ldots,O\}$. The grid is represented by a graph that consists of nodes and power lines. Let $\nodes =
\{1,\ldots,N\}$ be the set of nodes and $\lines  = \{1,\ldots,L\}$ the set of lines. 
As discussed in
Section~\ref{sec:background}, we consider three types of meters: power flow meters, node injection
meters, and PMUs. 
We denote the set of meters by $\meters  =
\{1,\ldots,M\}$. 
Some nodes $\nodes' \subset \nodes$
are assumed to always have zero power injection. The set $\meters$ is partitioned in the following way: for each operator $o \in \owners$, let
$\meters_o \subset \meters$ be the set of meters owned by $o$. Let $N'$ and $M_o$ the size of $\nodes'$ and $\meters_o$, respectively. Time is represented as a discrete set of \textit{slots} $t\in \{1,2,\ldots\}$. Each meter $m \in
\meters$ is expected to produce a measurement $x_{m\,t}$ in each time slot $t$, and send it to its operator's MDMS
server for processing.
We assume that time slots have fixed start and end points and that the operators' servers have loosely synchronized clocks. In the following, when we consider a single time slot, we omit the subscript $t$ and simply write $x_m$.

In practice, measurements can be delayed or missing altogether. The processing of measurements may be delayed for various reasons, including: 
\begin{inparaenum}[1)]
\item hardware or software problems on the meter affecting the transmission of the measurement to the operator's
MDMS server,
\item network conditions between the meter and MDMS server, or
\item hardware or software problems on the MDMS server.  
\end{inparaenum}
Similarly, reasons for measurements that permanently go missing include:
\begin{inparaenum}[1)]
\item hardware or software problems on the meter that prevent measurements from being recorded,
\item hardware or software problems on the meter that delete measurements stored on the flash memory before
they are transmitted,
\item messages being lost during transmission, or
\item hardware or software problems on the MDMS server that delete the measurements. 
\end{inparaenum}

It is not feasible to wait indefinitely for measurements since they may have gone missing. In addition, 
measurements from too far in the past are no longer relevant for operations management. We model this through a \textit{cut-off time} $\timeout$, which is the number of time slots after which measurements are no
longer relevant. We assume that each meter $m \in \meters$ is \textit{offline} in each time slot with a fixed probability $\missprob_m$.
In our setting, offline means that the meter is not able to communicate with the MDMS server from the start of the time slot until the end of the cut-off period. 
A meter can also produce \textit{bad data}, which we define as data that is, with high
probability (given the probability distribution of the measurement errors), different from the true quantity that is being measured. 

Given a grid specified in terms of $\nodes$, $\nodes'$, $\lines$, $\meters$, we can perform state
estimation as follows. First, we construct a topology matrix $G$ such that $\vec{x} = G
\vec{\theta} + \vec{\epsilon}$ where $\vec{x}$ is a column vector with measurements, $\vec{\theta}$ is the voltage phasor
angle in each of the nodes, and $\vec{\epsilon}$ are the errors. 
Next, we determine
the projection matrix $P=G(G^{T}G)^{-1}G^{T}$ for the least squares method. The residuals are then computed as
\mbox{$\hat{\epsilon} = \vec{x} - P\vec{x}$}.
If the sum of squared residuals, given by $\vec{\epsilon^{T} \epsilon} = \sum_{m \in \meters} \epsilon_m^2$, exceeds a given threshold $\theta'$, then an anomaly -- possibly caused by an FDI attack -- is detected. A more detailed description of how to construct the matrix $G$ using the grid topology can be found in Algorithm~\ref{sec:grid_matrix}.

\subsection{Threat Model}
\label{sec:threat_model}

In our model, we consider \textit{two} distinct types of attackers.

\textit{FDI Attacker: } this attacker's goal is to \textit{cause short-term disruption} to grid operations, potentially causing a blackout. FDI attackers can be very powerful, i.e., state-sponsored. We assume that this attacker is able to compromise some, but not all operators for a short period of time. By ``compromise'' we mean that the attacker has obtained access to the operator's computer systems or the private keys or passwords stored on these systems, e.g., through a hack, (spear) phishing, or an insider threat. 

\textit{Persistently Adversarial Operator: } this attacker's aim is to \textit{maximize its long-term profits}, possibly by deviating from the protocol. 
We consider the following types of persistent adversarial behavior by an operator: 1) refraining from registering valid meters, 2) uploading fake measurements from offline or non-existing meters, 3) taking its own meters offline by disconnecting them from the network, 4) preventing another operator's meters from sharing measurements, and 5) uploading distorted measurements by manipulating input signals or placing the meters at a different location in the grid than what is claimed.

An overview of threat models in the related literature can be found in Appendix~\ref{sec:threat_model_table}.

\subsection{Requirements}
\label{sec:requirements}
In the remainder of this work, we present a solution that satisfies the following requirements. 
\begin{enumerate}
\item \textbf{FDI Attack Security}. FDI attacks against compromised operators should still be detectable if redundant measurements are performed by uncompromised operators. 
\item \textbf{Incentive Compatibility.} Operators should be incentivized to join the system, and to refrain from the persistent adversarial behavior discussed in the previous section. 
\item \textbf{Scalability.} The solution should be able to support thousands of meters.
\end{enumerate}
In Sections~\ref{sec:incentives}~and~\ref{sec:implementation}, we present our data sharing mechanism and the implementation of our data sharing solution, respectively. We discuss under what conditions our solution satisfies the above requirements in Section~\ref{sec:analysis}.

\section{Incentive Mechanism}
\label{sec:incentives}
As discussed in Section~\ref{sec:model}, an FDI attacker who successfully compromises an operator can tamper with all of that operator's readings. Such attacks can be mitigated by sharing data between operators, but a na\"ive data sharing solution is vulnerable to adversarial behavior by operators. Grid operators compete with each other for the same customers, and as such may be willing to take
actions that harm other operators if it comes at no cost to themselves. As an example, see the recent EU lawsuit in
which TenneT \cite{eutennet}, which is both a TSO and a generator, was accused of restricting grid capacity to
 competing generators. Under our threat model, an adversarial strategy could arise in which an operator
withholds its own data while still reading the data from others. Alternatively, it can mislead the anomaly
detection efforts by other companies by including poorly calibrated or distorted readings. 

To defend against adversarial operators, we
propose an incentive mechanism that \textit{rewards} operators for each measurement, but \textit{penalizes} them for missing
and anomalous measurements. 
In particular, companies earn or lose \textit{credit} tokens based on the quantity and quality of their measurements. When a company runs out of credits, it is \textit{expelled} from the system. To avoid expulsion, a company must \textit{purchase credits} from other companies. This provides a financial incentive for companies to accumulate credits. We describe our incentive mechanism in the following.

\subsection{Description of the Mechanism}
\label{sec:mechanism}
In our mechanism, the grid
operators first agree on a topology -- i.e., on $\nodes$, $\nodes'$, and $\lines$. Each operator $o \in \owners$ then \textit{registers} a set of meters $\meters_0$ and states the position of each meter within the topology. The grid operators are each given $\initdeposit$ initial
\textit{credits}, where $\initdeposit$ is a large integer. In each time slot, the MDMS server of each
operator $o$ shares a single measurement from each meter $m \in \meters_o$ with the rest of the consortium. After the
cut-off period has passed, the slot is \textit{finalized} -- at this point, the measurements are evaluated and the credits are redistributed in the following
way. Each operator $o \in \owners$ receives a reward $\reward$ for each meter in $\meters_o$ that has shared a
measurement, and a penalty $\misspen$ for each meter in $\meters_o$ that has \textit{not} shared a
measurement. If at least one operator has not shared a measurement then the incentive mechanism is terminated after administering the missing measurement penalties.
However, if \textit{all} measurements have been received then the algorithm performs anomaly detection using state
estimation. In particular, it constructs the $(M+N') \times 1$ vector $\vec{x}$ as follows: for each $i=1\ldots,M$,
the $i$th element of $\vec{x}$ is set equal to the measurement of the $i$th meter. The next $N'$ elements are set equal to zero.
Finally, the residuals are computed as $\hat{\epsilon} = \vec{x} - P\vec{x}$, where $P$ is the projection
matrix of the weighted least squares estimator (as discussed in Section~\ref{sec:background}). We then compute the sum of the meters' squared residuals as $r = \sum_{m=1}^{M}
\hat{\epsilon}^2_m$. (We multiply the rows in the grid matrix $G$ that correspond to the zero-injection nodes by a large constant $\kappa$, so the $N'$ residuals that correspond to those nodes are typically negligible.) If $r$ is above an agreed threshold $\anomthres$, then the operators are penalized proportionally to the squared residuals of their meters. Let $\anompen$ be an agreed constant that determines the overall severity of the anomaly penalties. The penalties are then as follows: we iterate over all meters $m \in \meters$ and remove the following number of credits from of $m$'s operator:
\begin{equation}
\Phi'_m = \frac{\anompen}{r}\left(\epsilon^2_{m} - \frac{r}{M}\right).
\label{eq:penalty}
\end{equation}
That is, to compute  $\Phi'_m$, we subtract from $m$'s squared residual $\epsilon^2_{m}$ the fraction $\frac{r}{M}$, i.e., the system-wide average squared residual. The result is then multiplied by $\frac{\anompen}{r}$, which ensures that the maximum penalty per meter remains below $\anompen$. This prevents an extremely high residual, possibly due to an FDI attack or a fault, from causing an operator to lose all of its credits in a short time period. Note that $\Phi'_m$ can be either \textit{positive} or \textit{negative}, i.e., some operators profit if others are penalized.
In fact, it is easy to check that summing \eqref{eq:penalty} over all $m \in \meters$ yields $0$, so no credits are ever created or lost. Similarly, when a node
is rewarded by $\reward$ or penalized by $\misspen$, then each of the other $O-1$ operators loses or gains $\frac{\reward}{O-1}$ or $\frac{\misspen}{O-1}$
credits, respectively. As such, the total number of credits is fixed by design.

We note that if two operators $o$ and $u$ are symmetric in the sense that they have the same number of meters and a one-to-one mapping between any $m\in \meters_o$ and a $m' \in \meters_u$ such that $p_m = p_{m'}$, then their expected net gains/losses are also the same. Hence, if \emph{all} operators have exactly the same meters and error probabilities, then each of them will have a net loss of zero. In any other case, the credits of at least one operator will tend to zero over time. This means that smaller operators must periodically \textit{buy credits} from larger operators to avoid expulsion -- hence, larger operators are rewarded for contributing more meaningful data. This is an essential feature of our mechanism, as operators would otherwise not be incentivized to share as much data as they can. We assume that payments for credit transfers are done out-of-band, and
leave alternative means of achieving this, for example, using a two-way peg with a smart contract on another
blockchain such as Ethereum \cite{back2014enabling}, as future work. 
The speed at which companies run out of credits depends on the parameter choice. In Appendix~\ref{sec:example}, we present a realistic choice of parameters and illustrate our mechanism through a numerical experiment.

\subsection{Properties of the Incentive Mechanism}
\label{sec:incentive_properties}

In this section, we investigate the properties that determine our mechanism's incentive compatibility.
In particular, we present four corollaries that formalize the conditions under which our mechanism is incentive compatible. Each corollary considers a different adversarial strategy: not registering valid meters, withholding measurements from registered meters, blocking measurements from other operators' meters, and distorting measurements. These corollaries follow straightforwardly from three theorems that are stated and proven in Appendix~\ref{sec:theorems}.

 Each theorem focuses on the expected gains of an operator $o \in \owners$ in a single time slot, as changes to $o$'s credits are accrued over a succession of slots.
After each time slot is finalized, $o$'s credits are affected by each meter $m \in \meters_o$ in the following three ways: 
\begin{enumerate} 
\item $o$ earns a reward $\reward$ if $m$ produced a measurement, 
\item $o$ incurs a penalty $\misspen$ if $m$ did not produce one, and 
\item if there are no missing measurements and $r > \anomthres$, then $o$ earns or loses $\Phi'_m$ credits in accordance with \eqref{eq:penalty}. 
\end{enumerate}
Furthermore, $o$'s credits are affected by the meters of other operators. Due to the redistribution of the credits, $o$ loses or gains credits if other meters produce or miss measurements, respectively. Finally, $\Phi'_m$ depends on other meters' measurements through $\epsilon^2_j$ and $r$, and whether it is applied at all depends on whether any other meter failed to produce a measurement. In Appendix~\ref{sec:theorems}, we formalize this by decomposing $o$'s expected gains during a time slot into a summation with five terms:
$$
\rnet = \rtot - \phitot - \roth + \phioth + \phip.
$$
Each of the three theorems considers the impact on these terms of the different adversarial strategies. 

Our first corollary formalizes the conditions under which it is profitable to register a meter $m$, or to hide its existence from the other operators.

\begin{corollary}
It is profitable for $o$ to add any meter $m \in \meters_o$ if the impact of adding the meter $m$ on $\phip$ is negligible and $$
\missprob_m < \left(1 + \frac{\misspen}{\reward}\right)^{-1}.
$$
\label{cor:addmeters}
\end{corollary}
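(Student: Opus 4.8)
The plan is to work from the five-term decomposition $\rnet = \rtot - \phitot - \roth + \phioth + \phip$ supplied by the theorems and to isolate exactly those terms that the decision to register $m$ can change. The key structural observation is that the reward $\reward$ and the miss-penalty $\misspen$ are administered unconditionally in every slot, independently of whether the anomaly-detection stage runs, so they enter $o$'s net gain only through $\rtot$ and $\phitot$; the anomaly stage enters only through $\phip$. Moreover, $\roth$ and $\phioth$ capture the redistribution of the rewards and miss-penalties earned by the \emph{other} operators' meters, so they depend only on $\meters \setminus \meters_o$ and the corresponding offline probabilities. Registering or hiding a single meter $m \in \meters_o$ therefore leaves $\roth$ and $\phioth$ untouched, and the analysis reduces to the marginal effect on $\rtot - \phitot + \phip$.

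First I would compute the marginal contribution of $m$ to the first two terms. Since $m$ is offline with probability $\missprob_m$ and online otherwise, adding it to $\meters_o$ contributes $\reward$ to $\rtot$ precisely when $m$ is online and contributes $\misspen$ to $\phitot$ precisely when $m$ is offline. Taking expectations over the single slot, the marginal change in the reward/penalty balance is
\[
(1-\missprob_m)\,\reward \;-\; \missprob_m\,\misspen.
\]
By hypothesis the induced change in $\phip$ is negligible, so up to negligible terms this expression equals the marginal expected net gain $\Delta\rnet$ from registering $m$. It is then profitable to register $m$ exactly when $\Delta\rnet > 0$, i.e.\ when $(1-\missprob_m)\reward > \missprob_m\misspen$. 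Solving for $\missprob_m$ yields $\missprob_m < \reward/(\reward+\misspen) = (1+\misspen/\reward)^{-1}$, which is the claimed threshold; the reverse inequality makes hiding $m$ the better choice, so the bound is tight.

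The delicate point — and the reason the corollary carries the explicit ``negligible impact on $\phip$'' caveat — is the effect of $m$ on the anomaly stage, which I would not attempt to bound here. Registering $m$ both lowers the probability that the slot clears the ``all meters present'' check (since $m$ itself may be offline, in which case no anomaly penalties are levied at all) and alters the projection matrix $\matp$, and hence every residual, whenever the check does clear. Quantifying this two-fold perturbation is exactly the job of the underlying theorem; once it certifies that the change in $\phip$ is dominated by the $\reward$/$\misspen$ balance, the elementary expectation computation above closes the argument. I therefore expect the only real obstacle to lie in that $\phip$ estimate, with the remaining steps being a routine comparison of expected reward against expected penalty.
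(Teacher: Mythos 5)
Your argument is correct and is essentially the paper's own: the paper proves the corollary by setting $\Delta_m(\iiadd)=0$ in Theorem~\ref{thm:addmeters}, whose proof uses exactly your decomposition — the marginal effect on $\rtot$ is $(1-\missprob_m)\reward$, on $\phitot$ is $\missprob_m\misspen$, the terms $\roth$ and $\phioth$ are unchanged, and the residual $\phip$ perturbation is isolated as the quantity $\Delta_m(\iiadd)$ that the corollary's hypothesis declares negligible. You also correctly identify that the only non-elementary content lives in bounding that $\phip$ term, which is precisely what the theorem's statement parameterizes rather than bounds.
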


\begin{proof}
This follows from setting $\Delta_m(\iiadd)=0$ in Theorem~\ref{thm:addmeters} in Appendix~\ref{sec:theorems}.
\end{proof}

As it is profitable for $o$ to add suitable meters regardless of the actions of other operators, it is a \textit{Nash equilibrium} for all operators to add all meters $m$ for which $p_m$ satisfies the bound in Corollary~\ref{cor:addmeters}. In practice, changes to $\phip$ can be small because the maximum penalty $\anompen$ is low, because $\anomthres$ is high, or because the new meter has a small impact on the residuals. The bound in Corollary~\ref{cor:addmeters} depends only on the ratio $\frac{\misspen}{\reward}$ -- e.g., if $\reward$ and $\misspen$ are equal, then $\zeta = \frac{1}{2}$, which means that the meter must be online at least half the time for the rewards to outweigh the penalties. In practice, $\misspen$ should be high compared to $\reward$ (e.g., $\frac{\misspen}{\reward} > 1000$) to ensure that meters with a relatively high probability of being offline are not added.

The next corollary formalizes whether it is profitable for $o$ to share or withhold the readings produced by its meters.

\begin{corollary}
It is always profitable for $o$ to add the readings of all of its meters unless
$$
\reward + \misspen < \sum_{m \in \meters_o} \Phi'_m
$$
and $o$ is unable to block a reading by another operator's meter.
\label{cor:withhold}
\end{corollary}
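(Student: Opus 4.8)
The plan is to derive the corollary from the net-gain decomposition $\rnet = \rtot - \phitot - \roth + \phioth + \phip$ established in Appendix~\ref{sec:theorems}, by comparing $o$'s honest strategy (sharing every reading that its meters produce) against its best deviation (withholding at least one such reading). First I would note that $o$'s sharing decisions for its own meters leave $\roth$ and $\phioth$ untouched, since those terms depend only on the online/offline status and the sharing decisions of the \emph{other} operators' meters, and since rewards and missing penalties are administered to every operator whether or not the anomaly stage runs. The comparison therefore collapses onto the three terms $\rtot$, $\phitot$, and $\phip$ --- precisely the quantities that the withholding theorem of Appendix~\ref{sec:theorems} isolates.

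Next I would quantify the marginal effect of suppressing a single reading. Withholding one online meter's reading costs $o$ the reward $\reward$ it would have earned (a drop in $\rtot$) and triggers the missing-measurement penalty $\misspen$ (a rise in $\phitot$), for a combined loss of $\reward + \misspen$. The decisive point is the protocol rule that state estimation runs only once \emph{all} measurements have been received: a single withheld reading terminates the slot before the anomaly stage, so the net anomaly penalty $\sum_{m \in \meters_o} \Phi'_m$ (with $\Phi'_m$ as in \eqref{eq:penalty}) that $o$ would otherwise incur is avoided entirely and $\phip$ collapses to zero. Since one withheld reading already suppresses the whole anomaly stage, the cheapest deviation hides exactly one (online) meter, and the net benefit to $o$ is $\sum_{m \in \meters_o} \Phi'_m - (\reward + \misspen)$. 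This deviation pays off precisely when $\reward + \misspen < \sum_{m \in \meters_o} \Phi'_m$, which is the stated threshold; when $r \le \anomthres$ the anomaly stage contributes nothing, the right-hand side is non-positive, and sharing is trivially optimal.

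I would then dispose of the `unable to block' clause. If $o$ can instead block a reading by another operator's meter, it forces the same early termination --- and hence the same cancellation of $\phip$ --- without forfeiting any of its own $\reward + \misspen$; this alternative deviation is analysed in the blocking corollary. Consequently, whenever blocking is available the self-withholding deviation is strictly dominated, so sharing all readings stays optimal even if $\reward + \misspen < \sum_{m \in \meters_o} \Phi'_m$. Combining the two regimes yields exactly the `unless \ldots and $o$ is unable to block' qualifier.

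I expect the main difficulty to lie in the bookkeeping around the randomness and the timing of $o$'s decision rather than in any single inequality. The sum $\sum_{m \in \meters_o} \Phi'_m$ is fixed only once every operator's measurements are realized, so the argument tacitly assumes $o$ may condition its withholding decision on those values --- the read-then-hide capability stressed in the motivation --- and this information assumption must be made explicit. I would also need to check that the comparison only bites in the state where the anomaly stage would actually run: if any meter of $o$ or of another operator is already offline, state estimation is skipped for free, $\phip$ is zero on both strategies, and sharing every available reading is unconditionally optimal. The delicate step is thus verifying that $\phip$ vanishes under \emph{every} missing reading, correctly handling its interaction with the $r > \anomthres$ guard, while confirming that flipping $o$'s own missing-penalty status introduces no second-order leakage into $\roth$ or $\phioth$ --- the point where a mis-tracked redistribution share could spoil the clean $\reward + \misspen$ threshold.
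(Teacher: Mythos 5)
Your proposal is correct and follows essentially the same route as the paper: both arguments reduce to the withholding theorem's formula $\delta + q\,{\bf 1}_{M'}\sum_{m\in\meters_o}\Phi'_m$, observe that withholding a single own reading costs $\reward+\misspen$ while cancelling the entire anomaly-penalty term, and dispose of the blocking clause by noting that blocking another operator's reading achieves the same cancellation at no cost and hence dominates self-withholding. The only difference is presentational — you re-derive the marginal accounting from the five-term decomposition where the paper simply cites Theorem~\ref{thm:addmeasurements} — and your remarks on the read-then-hide information assumption match the hypotheses the paper states before that theorem.
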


\begin{proof}
From Theorem~\ref{thm:addmeasurements}  in Appendix~\ref{sec:theorems}, we know that $o$'s profit from withholding measurements equals $\delta + q {\bf 1}_{M'} \sum_{m \in \meters_o} \Phi'_m$, with $\delta$ and ${\bf 1}_{M'}$ as defined in the statement of the theorem. The operator $o$ maximizes $\delta$ by not withholding any of its own measurements, i.e., choosing $M'_o = 0$. However, if $o$ also does not block any other operator's measurement (i.e., $M_{\neg o}' = 0$), then ${\bf 1}_{M'} = 0$. In this case, choosing $M_o' = 1$ results in a profit of $-(\reward + \misspen) + \sum_{m \in \meters_o} \Phi'_m$. This is positive if the bound in the corollary holds. \end{proof}

Informally, the corollary states that it is profitable for $o$ to add all possible readings for its meters unless its expected loss of credits through the anomaly penalties outweighs the loss of $\reward$ and the penalty $\misspen$. In this case it would be profitable to withhold a single meter's measurement. In practice, this situation can be avoided by choosing a much smaller value for $\anompen$ than for $\misspen$ and $\reward$.

Although it is generally profitable for $o$ to add measurements, the zero-sum nature of credit redistributions implies that it is also profitable to block other operators' measurements. This is formalized through the next corollary.

\begin{corollary}
If it possible for $o$ to block $n$ distinct measurements of meters owned by other operators, then it is profitable to do so unless
$$
\frac{n}{O-1} (\reward + \misspen) < \sum_{m \in \meters_o} \Phi'_m.
$$
\label{cor:block}
\end{corollary}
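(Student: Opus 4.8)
The plan is to derive Corollary~\ref{cor:block} as a special case of Theorem~\ref{thm:addmeasurements}, exactly as the withholding result (Corollary~\ref{cor:withhold}) was obtained, but now setting $M'_o = 0$ and $M'_{\neg o} = n$. The guiding intuition is that blocking $n$ of the competitors' present measurements is the mirror image of withholding one's own: rather than forfeiting $\reward$ and paying $\misspen$ on its own meters, $o$ flips each targeted meter from earning a reward for a rival into triggering a missing-measurement penalty against that rival. Because every reward and penalty is redistributed in equal shares of $\frac{1}{O-1}$ among the other operators, each such flip moves $\frac{\reward}{O-1} + \frac{\misspen}{O-1}$ credits into $o$'s account.

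First I would compute the term $\delta$ of Theorem~\ref{thm:addmeasurements} under the choice $M'_o = 0$, $M'_{\neg o} = n$. With no own measurements withheld, $o$ suffers no direct loss of $\reward$ or $\misspen$ on $\meters_o$, so the only contribution to $\delta$ comes from the redistribution of the $n$ blocked meters. Conditioning on these being meters that actually produced a reading -- so that blocking genuinely removes a present measurement -- each contributes $\frac{\reward + \misspen}{O-1}$, giving $\delta = \frac{n}{O-1}(\reward + \misspen)$. This is the clean, purely arithmetic part of the argument.

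Next I would handle the anomaly term $q\,{\bf 1}_{M'}\sum_{m\in\meters_o}\Phi'_m$. Since $M'_{\neg o} = n \ge 1$, at least one measurement is absent, so the slot is finalized without running state estimation and none of the penalties $\Phi'_m$ are applied; the indicator ${\bf 1}_{M'}$ therefore activates. The factor $q$ records the probability that all of the remaining, non-blocked meters would have been present, i.e.\ the probability that detection would actually have run in the baseline where $o$ does nothing. The expected change in $o$'s anomaly credits caused by suppressing detection is thus $q$ times the anomaly balance $\sum_{m\in\meters_o}\Phi'_m$ over $o$'s own meters. Adding this to $\delta$ and requiring the total expected profit to be non-negative yields the bound: blocking is profitable exactly when the redistribution gain $\frac{n}{O-1}(\reward+\misspen)$ is at least the anomaly contribution, hence unprofitable precisely when $\frac{n}{O-1}(\reward+\misspen) < \sum_{m\in\meters_o}\Phi'_m$.

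The main obstacle I anticipate is pinning down the sign and weighting of the anomaly term. The subtlety is that $\Phi'_m$ is the number of credits \emph{removed} from $o$, so suppressing detection helps $o$ exactly when $\sum_{m\in\meters_o}\Phi'_m$ would have been positive; one must carry this sign (and the probability factor $q$ absorbed into the stated bound) through Theorem~\ref{thm:addmeasurements} and confirm that it combines with $\delta$ in the direction the corollary claims rather than the opposite. A secondary check is to verify that pure blocking ($M'_o = 0$) is the configuration that isolates the intended trade-off: since $\delta$ is maximized by withholding none of $o$'s own readings, while the detection-suppression benefit is already triggered by any single blocked meter, no mixed strategy dominates, and the pure-blocking case is the relevant comparison.
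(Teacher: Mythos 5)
Your proposal takes essentially the same route as the paper: instantiate Theorem~\ref{thm:addmeasurements} with $M'_o=0$ and $M'_{\neg o}=n$, read off $\delta = \frac{n}{O-1}(\reward+\misspen)$, note that any blocked measurement triggers the indicator so the anomaly term $q\sum_{m\in\meters_o}\Phi'_m$ is switched on, and compare against the do-nothing profile with profit $0$. The sign subtlety you flag in the anomaly term (whether the threshold should involve $\sum_{m\in\meters_o}\Phi'_m$ or its negation) is present in the paper's own statement and proof as well, so it is not a gap relative to the paper's argument.
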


\begin{proof}
Again, we obtain from Theorem~\ref{thm:addmeasurements} that $o$'s profit from withholding measurements equals $\delta + q {\bf 1}_{M'} \sum_{m \in \meters_o} \Phi'_m$. The operator $o$ maximizes $\delta$ by choosing $M'_o$ as low as possible and $M_{\neg o}'$ as high as possible. If $q {\bf 1}_{M'} = 0$ or $\sum_{m \in \meters_o} \Phi'_m \geq 0$ then it is always optimal to maximize $\delta$ by choosing $M'_{\neg o} = n$ and $M_o' = 0$. However, if $\sum_{m \in \meters_o} \Phi'_m < 0$, i.e., if $o$ is making a profit from the anomaly penalties, then it can be profitable for $o$ to choose $M'_{\neg o} = 0$ and $M_o' = 0$, but only if the bound in the corollary holds.
\end{proof}

The result of Corollary~3 is a necessary consequence of the incentive mechanism: if $o$ has no incentive to block measurements, then $o$ also has no incentive to share measurements. It should therefore be made as hard as possible for $o$ to block measurements. We discuss this in more detail in Section~\ref{sec:analysis}.

The final corollary formalizes the conditions under which it is profitable for $o$ to distort measurements. This is a technical condition that depends strongly on the structure of the grid and the location of $o$'s meters within it. In the following, let \mbox{$\distortmeters \subset \meters$} be the set of smart meters whose measurements can be arbitrarily distorted by $o$. Typically $\distortmeters = \meters_o$ but this is not a requirement. Let $d_{m}$, $m \in \meters$, be the distortion applied to $m$'s measurement, and $x_m$ the ``true'' -- i.e., corresponding to the physical reality at $m$'s claimed location in the grid -- measurement of meter $m$. We then construct the $(2M+N') \times 1$ vector $\vec{y}$ as follows: the first $M$ entries correspond to $d_1,\ldots,d_M$, the next $M$ values correspond to $x_1,\ldots,x_M$, and the final $N'$ entries correspond to the zero-injection nodes. 
We can then prove the following corollary.

\begin{corollary}
With $\vec{y}$ as defined above and matrix $A$ as defined in Appendix~\ref{sec:theorems}, it is profitable for $o$ to introduce distortions $d_m$, $m \in \distortmeters$, given $x_m$, $m \in \meters$, if and only if 
$
\;\vec{y}^{\,T} A \vec{y} < 0.
$
\label{cor:distort}
\end{corollary}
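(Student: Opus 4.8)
The plan is to isolate which part of the net-gain decomposition $\rnet = \rtot - \phitot - \roth + \phioth + \phip$ can move when $o$ \emph{distorts} reported values rather than withholding them. A distortion changes the \emph{value} of a reading but not \emph{whether} a reading is produced, so the reward term $\rtot$, the missing penalty $\phitot$, and the redistribution terms $\roth,\phioth$ (which depend only on which meters report) are all unaffected. Hence the profit from distortion equals the induced change in the anomaly term, i.e.\ in $-\sum_{m\in\meters_o}\Phi'_m$, and introducing the distortions is profitable precisely when this sum \emph{decreases} relative to the undistorted case (recall that $\Phi'_m$ credits are removed from $m$'s operator, so $o$ prefers $\sum_{m\in\meters_o}\Phi'_m$ small).

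First I would simplify $o$'s aggregate anomaly penalty. Summing \eqref{eq:penalty} over $m\in\meters_o$, and writing $r_o=\sum_{m\in\meters_o}\hat\epsilon_m^2$ and $M_o=|\meters_o|$, gives $\sum_{m\in\meters_o}\Phi'_m=\anompen\bigl(r_o/r - M_o/M\bigr)$. The crucial observation is that the average-subtraction term contributes only the \emph{constant} $M_o/M$, which cancels when we compare distorted and undistorted configurations. Next I would write $r$ and $r_o$ as quadratic forms in the measurement vector: with residual operator $I-\matp$ and diagonal selection matrices $D_M$ and $D_o$ that pick out all meters and $o$'s meters respectively, $r=\vec{x}^{\,T}B\,\vec{x}$ and $r_o=\vec{x}^{\,T}B_o\vec{x}$, where $B=(I-\matp)^T D_M (I-\matp)$ and $B_o=(I-\matp)^T D_o (I-\matp)$ are symmetric. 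Under distortion the measurement vector becomes $\vec{x}_0+\vec{d}$, where $\vec{x}_0$ holds the true values $x_m$ (with zeros on the $N'$ zero-injection entries) and $\vec{d}$ is supported on $\distortmeters$.

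Then I would convert the profit inequality $\sum_{m\in\meters_o}\Phi'_m\big|_{\mathrm{dist}}<\sum_{m\in\meters_o}\Phi'_m\big|_{\mathrm{true}}$ into the claimed form. After the $M_o/M$ terms cancel and the strictly positive denominators $r^{(0)}r^{(d)}$ are cleared (where $r^{(0)},r^{(d)}$ are the total squared residuals without and with distortion), the inequality becomes $r_o^{(d)}r^{(0)}-r_o^{(0)}r^{(d)}<0$. Since the true measurements are \emph{given}, the baseline scalars $r^{(0)}=\vec{x}_0^{\,T}B\,\vec{x}_0$ and $r_o^{(0)}=\vec{x}_0^{\,T}B_o\,\vec{x}_0$ are fixed constants; setting $\vec{z}=\vec{x}_0+\vec{d}$ and $C=r^{(0)}B_o-r_o^{(0)}B$, the left-hand side is the \emph{single} quadratic form $\vec{z}^{\,T}C\vec{z}$. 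Finally I would lift this to $\vec{y}$ via the map $\vec{z}=L\vec{y}$ with $L=\bigl(\begin{smallmatrix} I_M & I_M & 0\\ 0 & 0 & I_{N'}\end{smallmatrix}\bigr)$, which adds the distortion block and the true-measurement block of $\vec{y}$ and carries over the $N'$ zero entries, so that $\vec{y}^{\,T}A\vec{y}=\vec{z}^{\,T}C\vec{z}$ with $A=L^T C L$ --- precisely the matrix of Appendix~\ref{sec:theorems}. Every step is an equivalence, so both directions of the ``if and only if'' follow at once.

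The main obstacle is that the raw profit expression is a \emph{difference of two ratios} $r_o^{(0)}/r^{(0)}-r_o^{(d)}/r^{(d)}$, which is a priori quartic in the unknowns and gives no reason to expect a single quadratic form. Two facts rescue it: the average-subtraction constant $M_o/M$ drops out of the difference, and the baseline residual sums are fixed because the corollary conditions on $x_m$, so they may be absorbed as scalar weights in $C$, leaving a genuine (symmetric) quadratic form in $\vec{z}$ and hence in $\vec{y}$. A secondary point to flag is the anomaly threshold: the derivation assumes $r>\anomthres$ both before and after distortion so that \eqref{eq:penalty} is actually applied, and the borderline case in which distortion pushes $r$ across $\anomthres$ (switching the penalty on or off) must be handled separately or excluded; I would also note that the $\kappa$-weighting of the zero-injection rows is what justifies identifying $r$ with the meter-only quadratic form $\vec{x}^{\,T}B\,\vec{x}$.
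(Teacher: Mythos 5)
Your high-level route is the right one---distortion leaves the reward, missing-penalty, and redistribution terms untouched, so profitability reduces to whether the aggregate anomaly penalty $\sum_{m\in\meters_o}\Phi'_m$ decreases---and your algebra is correct up to and including the denominator-clearing step. The gap is the final identification: the matrix $L^{T}CL$ with $C=r^{(0)}B_o-r_o^{(0)}B$ is \emph{not} the matrix $A$ of Appendix~\ref{sec:theorems}. The paper's $A$ has its three nonzero blocks all equal, with entries $a_{kl}=\sum_{i\in\meters_o}\makerentry_{ik}\makerentry_{il}-\frac{M_o}{M}\sum_{j\in\meters}\makerentry_{jk}\makerentry_{jl}$, i.e.\ it is built from $B_o-\frac{M_o}{M}B$ with the \emph{fixed} weight $M_o/M$, and its $(x,x)$ block is zero because it encodes the difference $\sum_{m\in\meters_o}\bigl(\epsilon_m^2(\vec{d})-\epsilon_m^2(\vec{0})\bigr)-\frac{M_o}{M}\bigl(r(\vec{d})-r(\vec{0})\bigr)$, in which the pure-$x$ terms cancel. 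Your $C$ instead carries the data-dependent weights $r^{(0)}$ and $r_o^{(0)}$, and after the lift by $L$ it acquires a nonzero $(x,x)$ block. The two quadratic forms do not even agree in sign in general: if the distortion rescales every residual by a factor $c>1$ (e.g.\ $\distortmeters=\meters$ and $\vec{d}=(c-1)\vec{x}$), your form gives $r^{(0)}\cdot c^2 r_o^{(0)}-r_o^{(0)}\cdot c^2 r^{(0)}=0$, which is correct since the penalty $\anompen\bigl(r_o/r-M_o/M\bigr)$ is unchanged, while the paper's form gives $(c^2-1)\bigl(r_o^{(0)}-\frac{M_o}{M}r^{(0)}\bigr)$, which is generically nonzero. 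So you have proved a correct profitability criterion for a \emph{different} matrix, not the corollary as stated.

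The discrepancy originates in the paper rather than in your reasoning: the first line of the proof of Theorem~\ref{thm:distortmeasurements} asserts that $\Phi'_{o}(\vec{d})-\Phi'_{o}(\vec{0})<0$ is ``equivalent'' to \eqref{eq:rewards_full}, and that step silently discards the $\anompen/r(\cdot)$ prefactors of \eqref{eq:penalty}---exactly the normalization you were careful to retain. If the penalty were defined without the $1/r$ factor, the paper's $A$ would be the right matrix and your argument would simplify (no denominators to clear, no baseline scalars, vanishing $(x,x)$ block); as the mechanism is actually defined, your matrix is the faithful one and the paper's ``if and only if'' does not literally hold. To close the gap you must either prove the statement with the paper's $A$ (which requires adopting the paper's unnormalized reading of the penalty) or note explicitly that your $A$ differs from the one in Appendix~\ref{sec:theorems}. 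Your closing caveats---that $r>\anomthres$ must hold both with and without the distortion for \eqref{eq:penalty} to apply, and that the $\kappa$-weighting justifies restricting $r$ to the meter rows---are well taken and are also glossed over in the paper.
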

\begin{proof}
This is a reformulation of Theorem~\ref{thm:distortmeasurements}  in Appendix~\ref{sec:theorems}.
\end{proof}
In practice, an honest operator can determine whether a coalition of other operators is able to profit from distorted measurements by checking whether some combinations of distortions $d$ and measurements $x$ exist such that $\vec{y}^{\,T} A \vec{y} < 0$.
However, the operator should be careful to eliminate trivial attacks: e.g., where a single meter has an error, and the ``attack'' is to  correct this error. This can be achieved by introducing a suitable constraint: e.g., that the residuals $\epsilon = P\vec{x} -\vec{x}$ are equal to zero without the attack. This can be formulated as a Quadratic Programming (QP) problem:
\begin{equation}
\setlength{\tabcolsep}{0pt}
\text{
\begin{tabular}{lrlc}
minimize $\;\;\;$ & $\vec{y}^{\,T} A \vec{y}$ & &\\
subject to & $P\vec{x} - \vec{x}$ & $\; = 0$& \\
	& $x_m$ & $\; \geq 0$&, $m \in \meters$ \\
	& $d_m$ & $\; = 0$&, $m \notin \distortmeters$. \\
\end{tabular}
}
\setlength{\tabcolsep}{6pt}
\label{eq:qp}
\end{equation}
If the result found by solving \eqref{eq:qp} equals 0, then no profitable attack exists, whereas a profitable attack does exist if the result is negative. Solving a QP is computationally efficient if $A$ is positive semidefinite, but this is not always true in our setting. In this case, solving the QP is NP-Hard. Even in this case, state-of-the-art solvers such as BARON or IBM's CPLEX are typically able to solve QPs for several dozen variables. An alternative formulation of incentive compatibility  is to add the constraint $\vec{y}^{\,T} A \vec{y} < 0$ to \eqref{eq:qp}, and change the objective function to an arbitrary constant. Our mechanism is then incentive compatible if no feasible solution can be found. However, we have found that some of the software tools -- in particular CPLEX -- support matrices that are not positive semidefinite in the objective function, but not in the constraints. We leave a further investigation into the use of QP tools as future work.

\section{Implementation}
\label{sec:implementation}
In this section, we present an implementation of our data sharing solution, including the incentive mechanism described in the previous section.
In particular, we implement our solution on a private blockchain maintained by the consortium. The smart
contract processes data uploads and operates the incentive mechanism. Each operator runs a blockchain node on one of
its MDMS servers. We use Hyperledger Fabric for our implementation, due to its three advantages over the
alternatives, e.g., private Ethereum. First, it has higher throughput~\cite{dinh2017blockbench}. Second, it
supports a centralized Membership Service
Provider (MSP) that allows us to easily add and remove operators from the system.
Third, while Ethereum executes its contracts on its own virtual machine, Hyperledger supports native
code execution. In particular, a Hyperledger contract can be written in the Go language and import third-party
libraries, e.g., libraries that support fast matrix computations. 

\subsection{Data Model}
\label{sec:data_model}

The smart contract stores a variety of data fields, including information about the grid topology, the measurements, the system-wide constants and parameters, and information that is cached (e.g., the projection matrix $P$).
We note that Hyperledger only allows data to be stored in the form of key-value pairs, so during processing we occasionally need to parse strings into arrays or binaries. We did not find that this had a major impact on performance. An overview of the key-value pairs stored in the contract can be found in Table~\ref{tab:data}.

\begin{table}[h]
\centering
\caption{Key-value pairs stored in the smart contract.}
\label{tab:data}
\begin{footnotesize}
\setlength{\tabcolsep}{3pt}
\begin{tabular}{cccc}
\toprule
name & value & key & data type \\ \midrule
$n_1$ & 1st connected bus ID & line ID & integer-integer \\
$n_2$ & 2nd connected bus ID & line ID & integer-integer \\
$o$ & operator ID & meter ID & integer-integer \\
$X$ & reactance & line ID & integer-float \\
$l$ & line ID & meter ID & integer-integer \\ 
$n$ & node ID & meter ID & integer-integer \\ 
$s$ & type: $\arraycolsep=2pt\begin{array}{lcl} 0 & \Rightarrow & \text{SCADA bus} \\ 1 & \Rightarrow & \text{SCADA line} \\ 2 & \Rightarrow & \text{PMU} \end{array}\arraycolsep=5pt$ & meter ID & integer-integer \\
$d$ & deposit (credits) & operator ID & integer-integer \\ 
$\measurements'$ & recorded measurements & epoch & integer-float list \\ \bottomrule
\end{tabular}
\setlength{\tabcolsep}{6pt}
%}
\end{footnotesize}
\end{table}

\subsection{Smart Contract Functions}
\label{sec:sc_functions}

\begin{algorithm}[h]
    \caption{Measurement processing.}
\label{alg:process_measurement}
\footnotesize
\SetKwProg{func}{function}{}{}

\func{changeDeposit($o$, $D$)}{
	$D' \gets \max(-d[o], D)$ \Comment cannot subtract more from $o$ than $d[o]$

	$\Delta^* \gets 0$
	
	$\owners' \gets \emptyset$ \Comment{operators with non-zero credit}
	
	$O \gets 0$
	
	\For{$o' \in \owners$}{ \label{ln:owners}
		\If{$d[o] > 0$}{
		 	$\owners' \gets \owners' \cup \{o\}$
			
			$O \gets O + 1$
		 }\vskip-0.12cm
	}\vskip-0.12cm
	
	\For{$o' \in \owners'$}{
		\If{$o' \neq o$}{
			$\Delta \gets \left\lfloor \frac{D'}{O-1} \right\rfloor$
			
			$d[o'] = d[o'] - \Delta)$
			
			$\Delta^* \gets \Delta^* + \Delta$
		}\vskip-0.12cm
	}\vskip-0.12cm
	
	$d[o] \gets d[o] + \Delta^*$ \Comment{\begin{tabular}{l}due to rounding, \\ it may be that $\Delta^* \neq d$\end{tabular}}
}
\vskip-0.06cm

\func{finalizeTimeSlot($t$)}{

	$\meters' \gets \emptyset$ \Comment{meters that reported a measurement}
	
	\For{$(v, m) \in \measurements'[t]$}{ \label{ln:meas-for}
		$x_{m} \gets v$
			
		\textit{changeDeposit($o[m]$, $\reward$)}
		
		$\meters' \gets \meters' \cup \{m\}$
	}\vskip-0.12cm
	
	\If{$\meters' = \meters$} {
	
		\For{$i\in\{1,\ldots,N'\}$}{ \label{ln:zero-for}
			$x_{i+M} = 0$
		}\vskip-0.12cm
		
		$\vec{x} \gets (x_1,\ldots,x_{M+N'})^{T}$
	
		$\vec{\epsilon} \gets \vec{x} - P \vec{x}$ \label{ln:mat-vec-mult}
	
		$r \gets \sum_{j=1}^{M} \epsilon^{2}_j$
		
		$\Delta^* \gets 0$ 
	
		\If{$r > \theta$} {
		    $k = r / M$
		
			\For{$j \in \{1\ldots,n\}$}{ \label{ln:anom-pen-for}
			    $\Delta \gets \lfloor \anompen(\epsilon^{2}_j - k)/r \rfloor$
			
			    $d[o[m_j]] \gets d[o[m_j]] - \Delta$ 
			    
				$\Delta^* \gets \Delta^* + \Delta$
			}\vskip-0.12cm
			
			$m^* \gets \text{RandomValue}(1,\ldots,M)$
			
			$d[m[o^*]] \gets d[m[o^*]] - \Delta^*$
			
		}\vskip-0.12cm
	}\vskip-0.12cm \Else{
		\For{$m' \in \meters \backslash \meters'$}{
			\textit{changeDeposit($o[m']$, $-\misspen$)}
		}\vskip-0.12cm
	}\vskip-0.12cm
}\vskip-0.06cm

\func{processMeasurement($v'$, $t'$, $m'$)}{
	$c \gets$ \textit{currentBlock()} 
	
	$\measurements'[t'] \gets \measurements'[t'] \cup \{(v', m')\}$

	\While{$c > (f+1) \cdot \slotduration + \timeoutblocks$}{ \label{ln:while-loop}
		\textit{finalizeTimeSlot(\hspace{1pt}f+1)}
		
		$f \gets f+1$ 
	}\vskip-0.12cm
}
\end{algorithm}

\begin{algorithm}[h]
    \caption{Updating the grid.}
\label{alg:grid_change}
\footnotesize
\SetKwProg{func}{function}{}{}
\func{findTopologyMatrix($\nodes$, $\nodes'$, $\lines$, $\meters$)}{
	\For{$i \in 1,\ldots,|\nodes|-1$}{ \label{ln:top_init_for}
		\For{$j \in 1,\ldots,|\meters|+|\nodes'|$}{
			$g_{i\,j} \gets 0$
		}\vskip-0.12cm
	}\vskip-0.12cm
	
	$i \gets 0$
	
	\For{$m \in \meters$} { \label{ln:top_m_for}
		$i \gets i+1$
		
		\uIf{$s[m] = 0$}{
			\For{$l' \in \lines$} { \label{ln:top_l_for}
				\If{$n_1[l'] = n[m]$}{ 
					$g_{i\,n_1[l']} \gets g_{i\,n_1[l']} - 1/X[l']$
					
					$g_{i\,n[m]} \gets 1/X[l']$
				}\vskip-0.12cm
			}\vskip-0.12cm
		} \uElseIf{$s[m] = 1$} {
			$g_{n_1[l[m]]} \gets -1/X[l[m]]$
			
			$g_{n_2[l[m]]} \gets 1/X[l[m]]$
		} \Else{
			$g[b[m]] \gets 1$
		}
		\vskip-0.12cm
	}
	\vskip-0.12cm
	
	\For{$n \in \nodes'$} { \label{ln:top_n_for}
		\For{$l' \in \lines$} {
			\If{$n_1[l'] = n[m]$}{
				$g_{i\,n_1[l']} \gets g_{i\,n_1[l']} - \largeconstant$
				
				$g_{i\,n[m]} \gets \largeconstant$
			}	
			\vskip-0.12cm
		}\vskip-0.12cm
	}
	\vskip-0.12cm
}
\vskip-0.06cm

\func{multiplyCheck($A$, $B$, $C$, $D$)}{
	$X = AB$
	
	$Y = XC$
	
	\Return $Y = D$ 
}

\func{updateGrid($\nodes$, $\nodes'$, $\lines$, $\meters$, $P'$, $U$)}{

	$G' \gets $ \textit{findTopologyMatrix($\nodes$, $\nodes'$, $\lines$, $\meters$)} 	
	\If{multiplyCheck($G'^{T}, G', U, I_{|\meters|+|\nodes'|}$)} { \label{ln:ggu} 
		\If{multiplyCheck($G', U, G'^{T}, P'$)}{ \label{ln:gug}
			$G \gets G'$
			
			$P  \gets P'$
		}
		\vskip-0.12cm
	}
	\vskip-0.12cm
}
\vskip-0.12cm

\end{algorithm}

\paragraph{Initialization}

As mentioned before, the consortium members agree before the initialization of the smart contract on the system parameters. 
These parameters are set to the agreed values when the contract is created. Furthermore, an initial grid topology \mbox{($\nodes$, $\nodes'$, $\lines$, $\meters$)} is chosen and the initial consortium members $\owners$ are given an initial deposit $\initdeposit$. 

\paragraph{Measurement Processing}

During each time slot, each node uploads one measurement per meter by calling the \textit{processMeasurement} function of Algorithm~\ref{alg:process_measurement}. This function adds the measurement to $\measurements'[t']$, which is the set of measurements for the current time slot $t'$, and checks for which previous time slots the cut-off period has passed. For each such previous time slot $t$, the effect on the deposits can be processed through a call to \textit{finalizeTimeSlot}. This function first evaluates which meters have reported a measurement during $t$ -- these measurements are added to the set $\meters'$ and the operator of the meter is rewarded through a call to \textit{changeDeposit}. If some meters have not reported a measurement, then they are punished and the routine terminates. However, if all nodes have reported a measurement then the anomaly detection routine is started. In particular, the residual vector $\hat{\epsilon}$ is determined, and if the sum of squared residuals $r$ exceeds the threshold $\anomthres$, then the deposits are changed in accordance with \eqref{eq:penalty}. 

We note that the function \textit{changeDeposit} has built-in checks to ensure that nodes with zero credits do not gain credits from the redistribution of penalties to other nodes. Furthermore, it checks that no credits are lost due to rounding errors. Rounding errors may also occur when administering the anomaly penalties: in this case, any remaining credits are given to the operator of a meter sampled uniformly at random from $\meters$.

\paragraph{Grid Change}

To change the grid -- e.g., to add/remove meters, nodes, or lines -- one node proposes the change by using a transaction to call the \textit{updateGrid} described in Algorithm~\ref{alg:grid_change}. This transaction lists the new values for $\nodes$, $\nodes'$, $\lines$, and $\meters$. The grid topology is calculated using these lists in the \textit{findTopologyMatrix} function of Algorithm~\ref{alg:grid_change}.
This function constructs the grid topology as discussed in \cite{wood2013power,li2015location} and particularly \cite{bi2014graphical}.
Additionally, the new matrices $U = (G^{T} G)^{-1}$ and $P$ are sent as parameters of Algorithm~\ref{alg:grid_change}. As we discuss in Section~\ref{sec:complexity_analysis}, this speeds up the function by moving a computationally expensive matrix inversion from the blockchain (which is performed by all nodes) to a single node. The matrices are checked for correctness in the smart contract through the function $\textit{multiplyCheck}(A,B,C,D)$ which returns \textsc{true} if $ABC=D$ and \textsc{false} otherwise. In particular, the algorithms checks whether $G^{T}G U = I_n$, where $I_n$ is the identity matrix with dimension $n$, and whether $G U G^{T} = P$. The interaction between the processes executed by a node's client and the blockchain is depicted in Figure~\ref{fig:flowchart}.

\begin{figure}[h]
\centering
\includegraphics[width=0.47\textwidth]{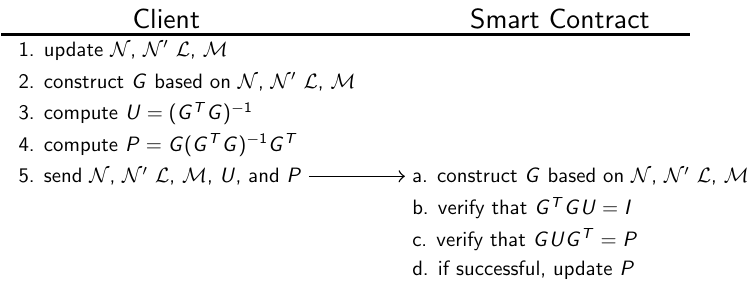}
\caption{Flowchart that depicts the processes that occur when the grid topology is changed.}
\label{fig:flowchart}
\end{figure}

Although we omit this from the pseudocode for brevity, all operators whose meters and added/removed must vote to approve changes to the grid. Other nodes do not need to approve the changes, but have the ability to cast a veto within a cooldown period if the change is suspected to be the work of an attacker -- e.g., removing all or nearly all of an operator's meters, or adding too many meters.

\paragraph{Other Functions}

We briefly discuss the remaining smart contract functions, but we do not present pseudocode for them because they are straightforward to implement. First, operators will occasionally transfer funds between each other, particularly if one has almost depleted its credits. This can be done via a call to a \textit{transferCredits} function that removes credits from the sender's deposit and adds the same amount to the recipient's deposit.

To expel an operator who has run out of credits, a function \textit{expelOperator} can be called to remove the operator from $\owners$, and remove all meters that it operates. To add a new operator, a function \textit{addOperator} can be called by an existing operator $o$ -- as part of this function, $o$ must also send credits to the new operator. Otherwise, the addition/removal of the node (and the suspension of read access) is processed via the MSP in Hyperledger Fabric.

Finally, the operators may occasionally want to change the system parameters. This could be for a variety of reasons, e.g., when the anomaly detection threshold $\anomthres$ is found to be too high or low relative to the natural variance in the measurements. In this case, one node proposes a system parameter change through a call to the \textit{proposeParameter} function. In this case, a supermajority of operator must vote to approve the change -- note that the supermajority threshold itself can be a modifiable system parameter.

\subsection{Asymptotic Complexity}
\label{sec:complexity_analysis}

In this section we analyze the asymptotic complexity of the functions in Algorithms~\ref{alg:process_measurement}~and~\ref{alg:grid_change}, and identify the bottlenecks that we seek to ameliorate. Before we begin, we note that the time complexity of multiplying an $N \times M$ matrix by an $M \times K$ matrix is $\Theta(NMK)$.\footnote{While some practical approaches exist that achieve slightly higher speeds, e.g.,  Strassen's algorithm, this does not have an impact on our analysis.} Furthermore, the time complexity of checking the equality of two $N \times M$ matrices is $\Theta(NM)$. An overview of the time complexities of the most computationally intensive lines in the smart contract functions is given in Table~\ref{tab:complexities}. To obtain these results, we made two assumptions about the grid, in particular that $\NM \geq N - 1$ (as otherwise the matrix $G^{T}G$ would not be invertible), and that $\Theta(ML) = \Theta(MN) = \Theta(\NM N)$ (as there will typically be only several lines per node, and one or two meters per line).

We discuss the \textit{updateGrid} function in Algorithm~\ref{alg:grid_change} in detail, as it includes an important optimization step. The function starts by calling the \textit{findTopologyMatrix} function to obtain $G$, which has complexity \mbox{$\Theta(\NM N)$}. It then calls \textit{multiplyCheck} in line~\ref{ln:ggu}, which computes $G^{T} G$. This consists of multiplying an \mbox{$(N-1) \times \NM$} by an $\NM \times (N-1)$ matrix. The time complexity of this is $\Theta(\NM N^2)$. Next, $G^{T} G$ is multiplied by $U$, which has complexity $\Theta(N^3)$. It then checks equality with a $(N-1) \times (N-1)$ identity matrix, which requires $\Theta(N^2)$ operations. In line~\ref{ln:gug}, \textit{multiplyCheck} is invoked again. First, $G U G^{T}$ is computed, which requires the multiplication of a $\NM \times (N-1)$ matrix with a $(N-1) \times (N-1)$ matrix. This is followed by the multiplication of a $\NM \times (N-1)$ matrix with a $(N-1) \times \NM$ matrix. The first step has complexity $\Theta(\NM N^2)$ and the second step a complexity of $\Theta(\NM^2 N)$. 

If, instead of requiring $U$ and $P$ as transaction input, the smart contract would determine $P$ directly from $G$, then it would have to compute $G (G^{T}G)^{-1} G^{T}$ by itself. This would require three matrix-matrix multiplications, in contrast to the four used in the current implementation. However, it would also require the inversion of an $(N-1) \times (N-1)$ matrix. Inversion of an $(N-1) \times (N-1)$ matrix has the same theoretical time complexity as multiplication of two $(N-1) \times (N-1)$ matrices, but in practice it is much slower. Our approach is therefore faster than the na\"ive approach.

\begin{table}[htp]
\caption{Time complexity of the smart contract functions}
\begin{center}
\vskip-0.2cm
\begin{tabular}{cc}
function  & complexity \\ \toprule 
\textit{finalizeTimeSlot}  & $\Theta(\NM^2)$ \\[0.05cm]   
\textit{findTopologyMatrix} & $\Theta(\NM N)$ \\[0.05cm]   
\textit{updateGrid}  & $\Theta(\NM^2 N)$ \\
\bottomrule
\end{tabular}
\end{center}
\label{tab:complexities}
\vskip-0.5cm
\end{table}

\subsection{Performance Optimizations}
\label{sec:optimizations}

To detect an attack as soon as possible, it is preferable for the time slot duration to be as short as possible. Short time slot durations have the additional advantage of minimizing the variance between measurements taken at different time points in the slot. However, since the \textit{finalizeTimeSlot} must be executed at least once per time slot, it is essential that the computational overhead of this function is minimal. Furthermore, even though \textit{updateGrid} is only called sporadically, its computational overhead should not be such that it significantly impedes the processing of measurements. It is therefore essential that the performance of these functions is optimized.

To achieve this, we have implemented four practical optimizations. The first optimization is the offline computation and caching of $P$ as discussed in Section~\ref{sec:sc_functions}. By moving the most expensive part of the computation offline, we avoid interference with the processing of measurements. 

For the second optimization, we replace the function \textit{multiplyCheck}, which uses two expensive matrix-matrix multiplications, with the \textit{freivaldsCheck} function of Algorithm~\ref{alg:freivalds}. The latter uses Freivalds' algorithm \cite{freivalds1979fast,tramer2018slalom}, which is a randomized algorithm that requires four inexpensive matrix-vector multiplications instead of two matrix-matrix multiplications. Our implementation makes use of a cryptographic hash function $H_p$ that takes a byte array as input and returns an integer that is sampled uniformly at random from $\{0,\ldots,p-1\}$. We then construct a random vector $\vec{r}$ by repeatedly applying $H_p$ to a concatenation $A|B|C|D$ of the four input variables. The function then evaluates whether $ABC\vec{r} = D\vec{r}$. It can be shown that if $ABC \neq D$, then the probability that \textit{freivaldsCheck} returns \textsc{true} is at most $\frac{1}{p}$, which is negligible for large $p$.

\begin{algorithm}
    \caption{Fast verification of matrix equalities.}
\label{alg:freivalds}
\footnotesize
\SetKwProg{func}{function}{}{}
\func{freivaldsCheck($A$, $B$, $C$, $D$)}{
	
	$\vec{r} = (r_1\ldots,r_{m_D})$
	
	$r_1 \gets H_p(A|B|C|D)$
	
	\For{$i \in \{2,\ldots,m_D\}$} {
		$r_i \gets H_p(r_{i-1})$
	}
	\vskip-0.12cm

	$\vec{c} \gets C \vec{r}$ \label{ln: cr}
	
	$\vec{b} \gets B \vec{c}$ \label{ln: bc}
	
	$\vec{a} \gets A \vec{b}$ \label{ln: ab}

	$\vec{d} \gets D \vec{r}$
	
	\Return{$\vec{a} = \vec{d}$}
}
\vskip-0.06cm
\end{algorithm}

The third is the efficient storage of measurement data by partitioning the set $\measurements'$ into different sets $\measurements'[t]$ for each time slot $t$. We also store the elements of these sets using a different key-value pair for each measurement, instead of using a single key-value pair to store the entire set. The former approach reduces the complexity of write operations, since it avoids reading the existing entry, updating it, and writing the new combined value to the storage -- also known as {\em write amplification}. By contrast, the latter approach would reduce the number of read operations in \textit{finalizeTimeSlot}. However, we found that the write operations were a greater bottleneck than the read operations, as we discuss further in Section~\ref{sec:experiments}.

Finally, we found in Section~\ref{sec:complexity_analysis} that the bottlenecks in both \textit{finalizeTimeSlot} and \textit{updateGrid} were matrix-vector or matrix-matrix multiplications. Both are trivially paralellized, since the computation of the elements in different rows can be done independently.

\section{Analysis}
\label{sec:analysis}
In this section, we analyze the data sharing solution presented in Sections~\ref{sec:incentives}~and~\ref{sec:implementation} in terms of the requirements of Section~\ref{sec:requirements}, i.e., \textit{FDI attack security}, \textit{incentive compatibility}, and \textit{scalability}. For each requirement, we discuss the technical conditions under which it is satisfied, and we conclude the section by discussing the practical relevance of these conditions and our modeling assumptions.

\subsection{Requirements}
\label{sec:requirements_analysis}

\textit{FDI Attack Security: } for this requirement to be satisfied, FDI attacks against a compromised operator must still be detectable if redundant measurements are performed by uncompromised operators.
To make this formal, we consider an attack in which a set of operators $A \subset \owners$ is compromised. In such an attack, all meters \mbox{$m \in \meters'(A)$} where $\meters'(A) = \cup_{o \in A} \meters_o$ are insecure and their measurements can be arbitrarily distorted by the FDI attacker, whereas the other meters are secure and their measurements cannot be distorted. This scenario is well-known from the FDI literature \cite{bobba2010detecting,chen2006placement} -- e.g., \cite{chen2006placement} discusses placement strategies for secure meters that ensure that the system remains observable even if the insecure meters are subject to an FDI attack. In our setting, each operator $o \in A$ should consider the system that consists only of the meters in $\meters \backslash \meters'(A)$ and the nodes in $\nodes$ that are \textit{sensitive} to an FDI attack -- i.e., those nodes for which it holds that if their state is not estimated correctly, then grid operation may fail, e.g., leading to blackouts. If this system is observable (i.e., the matrix $G$ has full rank), then the data sharing solution satisfies the property of FDI attack security.

\textit{Incentive Compatibility -- Joining the Network: } the first part of our incentive compatibility requirement is satisfied if operators are incentivized to join the data sharing consortium. This holds for any operator $o$ if $o$'s benefits of joining outweigh the costs. To determine whether this is true for $o$, we first determine the expected credit change per time slot $\rnet$ as defined in Section~\ref{sec:incentives}. A first observation is that if $\rnet$ is positive, then it is \textit{always} profitable for $o$ to join the network. However, if $\rnet$ is negative, then $\rnet$ needs to be outweighed by the reduced \textit{expected costs of an FDI attack} per time slot, which we denote by $K$. 
To compute $K$, the operator must determine for each attack scenario $A \subset \owners$ the quantities $p^{\textsc{ATT}}_{A}$, ${\bf 1}^{\textsc{DET}}_A$, $K_A$, and $K'_A$, which are defined as follows:

\begin{center}
\vspace{0.1cm}
\begin{tabular}{lp{0.8\linewidth}}
$p^{\textsc{ATT}}_{A}$ & the likelihood that attack scenario $A$ occurs in a single time slot (e..g, if $A$ is expected to occur once a year, and a year consists of $S$ slots, then $p^{\textsc{ATT}}_{A} = S^{-1}$), \\
${\bf 1}^{\textsc{DET}}_A$ & $1$ if the attack $A$ can be detected by the meters in $\meters \backslash \meters'(A)$ and $0$ otherwise,\\
$K_A$ & the expected cost of attack $A$ if it is detected, and \\
$K'_A$ & the expected cost of attack $A$ if it \textit{not} detected.\\ 
\end{tabular}
\vspace{0.1cm}
\end{center}
Given these quantities, $o$ is incentivized to join the network if
$$
-\rnet < \sum_{A \subset \owners} p^{\textsc{ATT}}_{A} {\bf 1}^{\textsc{DET}}_A (K_A - K'_A),
$$
that is, if the expected costs per time unit of participating in the mechanism are smaller than the expected cost reduction per time unit from attacks that the mechanism can detect.

\textit{Incentive Compatibility -- Adversarial Behavior: } the conditions under which each of the five types of adversarial behavior presented in Section~\ref{sec:threat_model} are disincentivized are as follows.

\begin{itemize}
\item \textit{Refraining from registering valid meters}: from Corollary~\ref{cor:addmeters}, we know that it is always profitable for $o$ to add each meter~$m$ whose probability of missing a measurement $p_m$ is low enough, i.e., below the bound stated in the corollary. As such, it is only profitable to add those meters for which the probability of missing a measurements is low enough.
\item \textit{Uploading fake measurements from offline or non-existing meters:} we know from Corollary~\ref{cor:withhold} that is always profitable to upload as many measurements as possible, even from fake or offline meters. However, from Corollary~\ref{cor:distort} we know that sharing measurements that do not reflect the true system state can be costly, depending on the grid topology.
\item \textit{Taking its own meters offline:} we know from Corollary~\ref{cor:withhold} that this is never profitable.
\item \textit{Blocking other operators' measurements:} we know from Corollary~\ref{cor:block} that this is always profitable if it is possible.
\item \textit{Uploading distorted measurements:} we know from Corollary~\ref{cor:distort} that this can be costly, depending on the grid topology.
\end{itemize}

\textit{Scalability: } as discussed in Section~\ref{sec:optimizations}, we have presented four optimizations that improve our data sharing solution's practical performance: offline computation of the matrix $P$, Freivalds' algorithm, partitioning the measurement dataset, and multithreading. To demonstrate that this is sufficient to support thousands of meters, we present an experimental evaluation in Section~\ref{sec:experiments}.

\subsection{Discussion: Practical Relevance}
\label{sec:discussion}

We now discuss the practical impact of the technical conditions mentioned in the previous section. We first discuss the impact of an operator's size on its expected gains and losses, and then discuss how to defend against types of misbehavior that are not disincentivized (e.g., blocking or distorting measurements).

\textit{Large vs.\ Small Operators: } in the previous section, we saw that operators join the consortium only if the expected costs of lost credits are outweighed by the expected cost of an FDI attack and whether such an attack can be prevented by the data sharing solution. The latter only holds if the operator's meters are redundant in a network that also includes the meters of other operators. In general, this is more likely to be the case for smaller operators than for larger operators. For example, in a system with regional operators (e.g., Japan), only the meters on the edge of the operator's region are likely to have redundant measurements performed by other operators. If an operator's region is smaller, then the fraction of its meters that are on the region's edge will typically be larger, as the region's interior will be smaller. Similarly, in a system with a handful or large TSOs and a multitude of smaller DSOs (e.g., Germany), the measurements that are most important to the DSOs (i.e., on lines with the largest power flows) will be performed on lines that are also of interest to the TSOs.

 Conversely, larger operators are in general more likely to profit from the credit redistribution mechanism than smaller operators, because they share a larger number of measurements. As such, our mechanism maximizes operator participation by incentivizing both smaller and larger operators to join: the larger operators because they gain credits, and the smaller ones because they gain a larger degree of protection from FDI attacks.

\textit{Alternative Defenses Against Persistently Adversarial Operators: } We saw in Section~\ref{sec:requirements_analysis} that the following three types of adversarial behavior are not always disincentivized: uploading fake measurements, blocking measurements, and (depending on the grid topology) distorting measurements. In each case, defense measures exist that make this type of adversarial behavior hard to execute. In the first case, uploading fake measurements can be made harder by requiring that measurements are \textit{signed} by the meter using a private key that is known only to the equipment manufacturer. Such functionality is supported by default in meters that are equipped with a Trusted Platform Module \cite{tcglibrary}.
In the second case, mechanisms that defend against transaction blocking attacks -- e.g., DoS or eclipse attacks -- are known from the blockchain security literature \cite{homoliak2020security}. In Appendix~\ref{sec:blockchain_security}, we discuss how to defend against such attacks in our setting using round-robin consensus. Finally, operators can use audits by external parties to detect measurement distortion attacks.

In practice, an operator would not only consider direct financial incentives before adopting an adversarial strategy, but also reputation damage and lawsuits. Some types of misbehavior (e.g., not adding a measurement) are legal but undesirable, whereas others are illegal (e.g., blocking another operator's measurements or distorting measurements). Depending on the political situation in the operator's country, a regulator could simply make all data sharing mandatory. However, as witnessed by the Tennet lawsuit \cite{eutennet}, but also the recent Volkswagen emissions and \mbox{LIBOR} fraud cases, laws alone cannot provide a strict guarantee against misbehavior that is hard to detect. As such, our mechanism disincentivizes mild and hard-to-detect cases of misbehavior (e.g., neglecting to add meters), while leaving only the more serious cases that are susceptible to detection by other operators to the regulator.

\section{Experiments}
\label{sec:experiments}
In this section we present our evaluation of the effect of our optimizations on performance. 
We use Hyperledger Fabric v0.6 with version 1.13 of Go for the smart contract (`chaincode' in Hyperledger terminology). All the transactions are signed using the SHA256 hash algorithm. For matrix multiplication and  serialization, we use the Go library \texttt{gonum-v1}. We run the Hyperledger nodes on top of 4 powerful VMs -- each VM runs \texttt{ubuntu-18.04}, and has 8 cores and 32GB RAM. For each Hyperledger node, we increase the Docker memory limit  to 25GB. In the following, we use the shorthand notation $K = M + N'$ for the dimension (``\textit{size}'') of the matrices that we use for benchmarking.

\begin{figure}
     \centering
     \subfloat[][Finalize and detect anomaly]{\includegraphics[width=.34\textwidth]{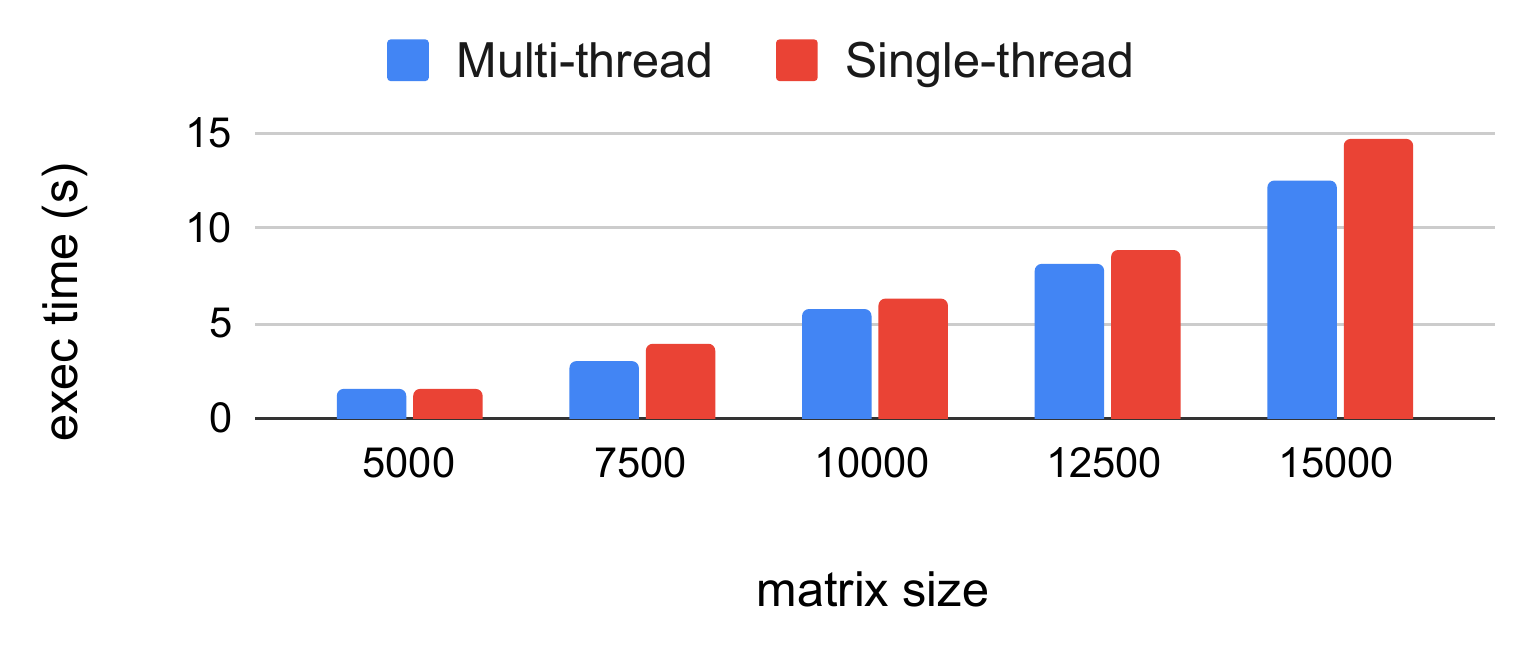}  
     \label{fig:plot1}}
     
     \vspace{-0.2cm}
     \subfloat[][Update projection matrix]{\includegraphics[width=.34\textwidth]{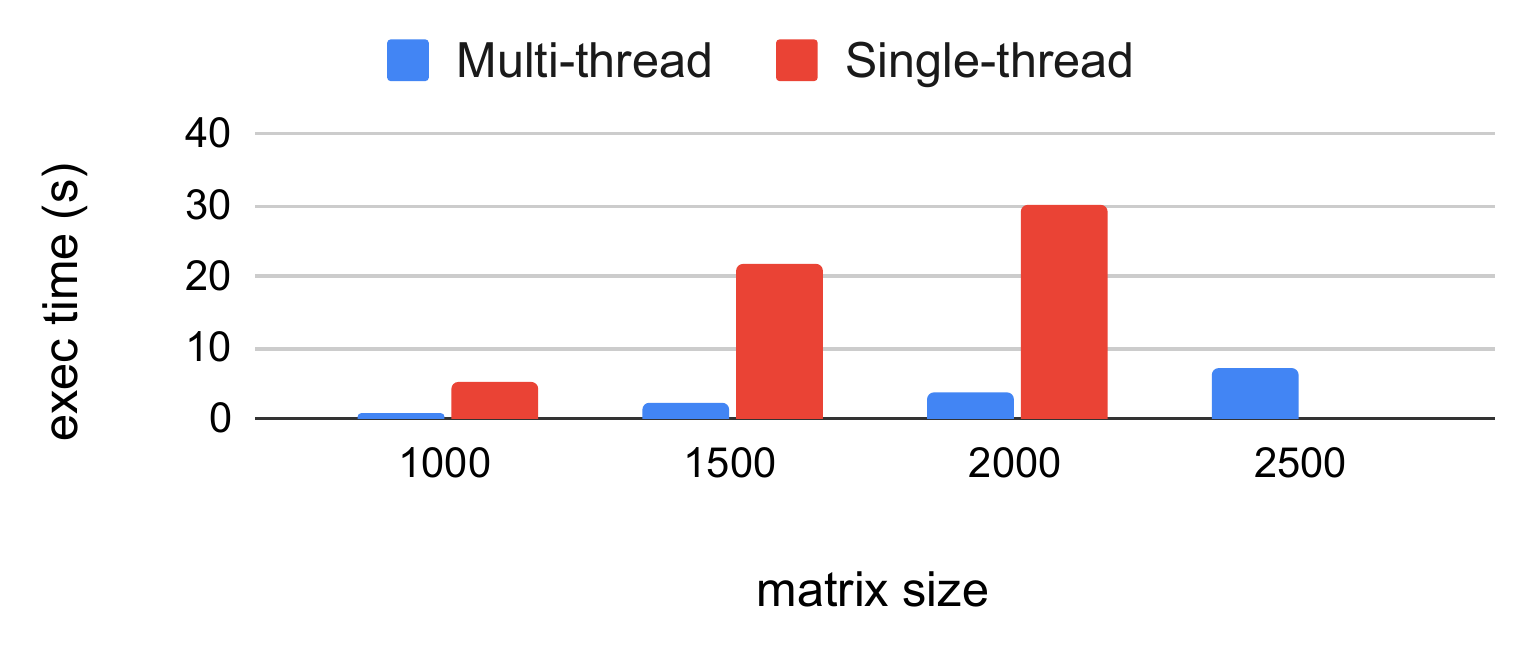}
     \label{fig:plot2}}
     \caption{Speed up due to parallel matrix multiplication.}
     \label{fig:latency}
     \vspace{-0.2cm}
\end{figure}

First, we evaluate the cost of grid updates and time slot finalizations for both single- and multi-threaded processing, to show the benefits of parallelizing matrix computations.
Figure~\ref{fig:plot1} shows the transaction execution time for finalization. Although loading the matrices from storage takes a large amount of time for large matrices, e.g., 2 seconds even for $K=5000$, we observe the benefits of parallel computation as the matrix size grows. In Figure~\ref{fig:plot2}, we display the execution times of updates to the projection matrix $P$, which is a $K \times K$ square matrix. The transaction size for $K=1000$ is 32MB and for $K=2000$ it is 128MB. Single-threaded computation takes a long time and even crashes at $K=2500$.

\begin{figure}
    \centering
    \subfloat[][Single vs multiple key-value pairs]{\includegraphics[width=.34\textwidth]{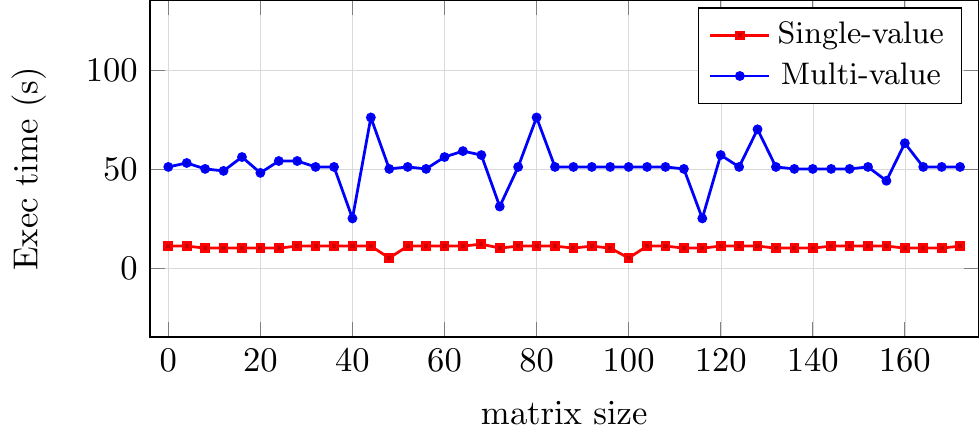}  
     \label{fig:plot3}}

\vspace{-0.2cm}
     \subfloat[][Single vs multiple thread anomaly detection]{\includegraphics[width=.34\textwidth]{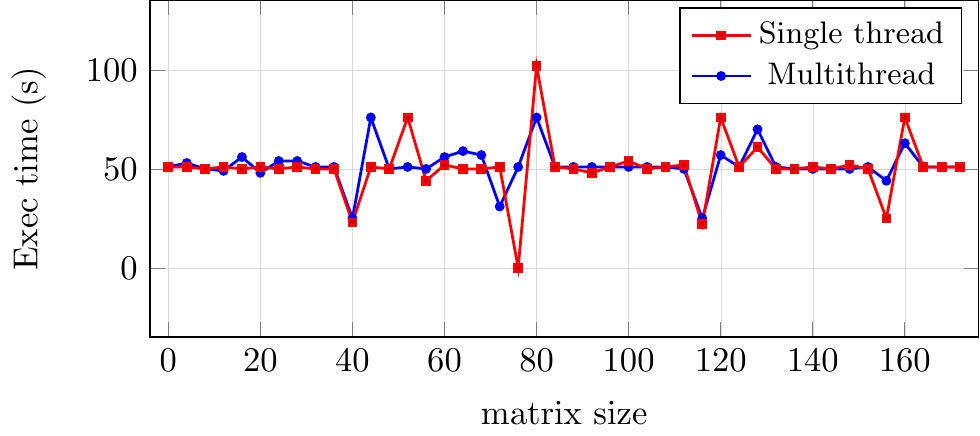}  
     \label{fig:plot4}}
     
      \caption{Throughput with different optimizations.}
      \label{fig:plot34}
      \vspace{-0.2cm}
\end{figure}

Next, we show the impact of our data storage and parallel execution optimizations on the throughput over time in Figures~\ref{fig:plot3}~and~\ref{fig:plot4}. Measurements are simulated and sent as transactions, and the time slots are finalized after 40-second intervals. We obtain the maximum throughput by increasing the requests per second for as long as the blockchain can manage. The matrix size $K$ equals $2000$ and the transaction size is 686 Bytes, including the signature. We clearly see that the throughput drops after each time slot finalization (at each 40 second mark) in both figures.

Figure~\ref{fig:plot3} shows the effect of different data layouts. When
we store multiple values on the same key, we suffer from write amplification (as discussed in Section~\ref{sec:optimizations}), so throughput is
low. Figure~\ref{fig:plot4} shows the impact of parallel execution on throughput. We see that the impact is
limited, because the execution time is too small to affect throughput. The \textit{average} throughputs for $K=2000$ are as follows: 51.6 for multi-threaded execution and 50.3 for single-threaded execution, which is not a major difference. We therefore find that although multithreading greatly accelerates grid updates, it does not have a major effect on time slot finalizations. However, data storage optimization does have a significant effect on the measurement transaction throughput.
We note that even though the throughput is low at 50 tx/sec, this means that we can set the time slot duration to 40 seconds for 2000 meters and hence collect measurements from each meter every 40 seconds.

\begin{figure}
    \centering
    \subfloat[][Single vs multiple key-value pairs]{\includegraphics[width=.34\textwidth]{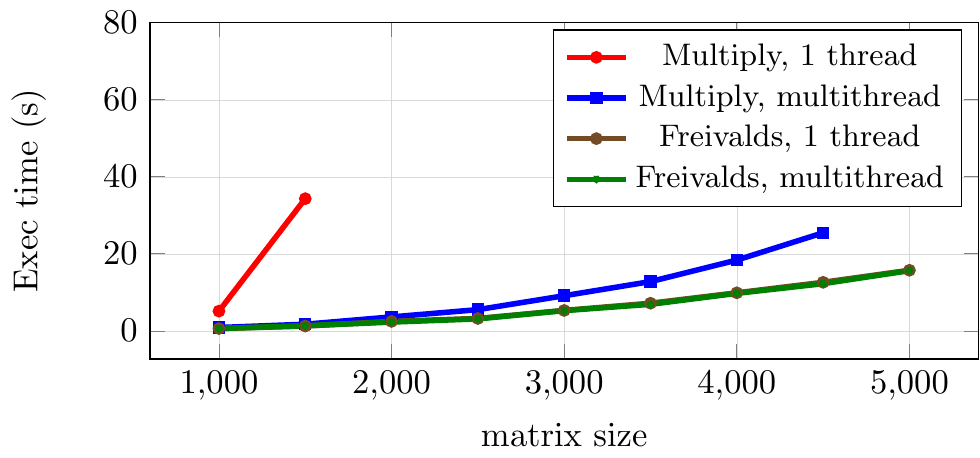}  
     \label{fig:plot5}}

\vspace{-0.2cm}
     \subfloat[][Single vs multiple thread anomaly detection]{\includegraphics[width=.34\textwidth]{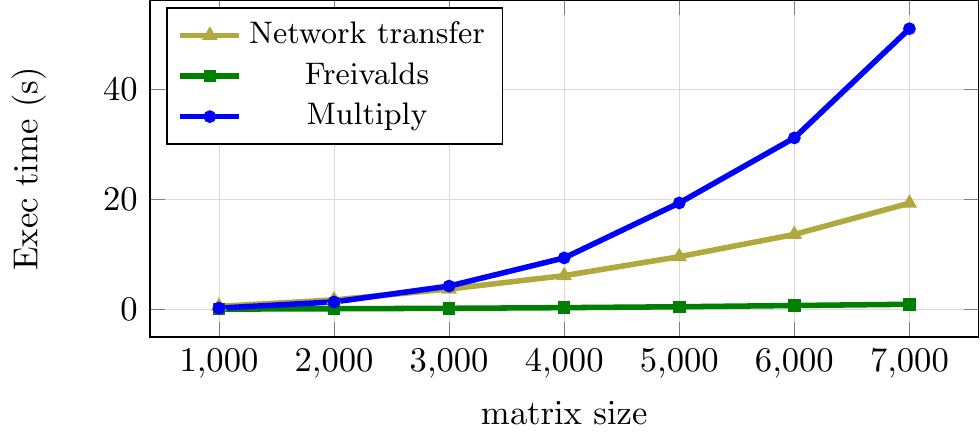}  
     \label{fig:plot6}}
     
      \caption{Effect of different performance optimizations on the execution time of the grid change algorithm.}
      \label{fig:plot56}
      \vspace{-0.2cm}
\end{figure}

Finally, Figure~\ref{fig:plot5} shows the reduction in execution time that we achieve by replacing the \textit{multiplyCheck} function of Algorithm~\ref{alg:grid_change} with the more efficient \textit{freivaldsCheck} from Algorithm~\ref{alg:freivalds}. We observe that the execution time drops further, from over 25 seconds for $K=5000$ to slightly over 12 seconds. However, we also observe little difference between executing the three matrix-vector multiplications sequentially in a single thread, (brown line) or in parallel (green line). To explain this, we display in Figure~\ref{fig:plot6} a breakdown of the cost into the network transfer cost -- i.e., transmitting the data of the fairly large $P$ and $U$ matrices -- and the computational cost of the two algorithms. In contrast to the \textit{multiplyCheck}, the cost of executing the \textit{freivaldsCheck} function is negligible compared to the network cost. As such, the potential to further reduce the execution time of this function through computational speed-ups is negligible.

\section{Related Work}
\label{sec:related}
Since we have already compared the paper to related work from the state estimation literature in Section~\ref{sec:model}, we instead focus in this section on comparing our work to related blockchain approaches for smart grids. 
Fan and Zhang \cite{fan2019consortium} propose a consortium blockchain that allows operators and consumers to exchange data, but they focus solely on privacy concerns that they address through encryption. In another paper by the same authors \cite{zhang2018blockchain}, a blockchain mechanism is proposed that allows devices such as smart meters to exchange fault diagnosis reports through a blockchain. Li et al.\ \cite{li2017consortium} present a blockchain framework that allows prosumers (consumers that produce electricity, e.g., via solar panels) to trade energy credits with other consumers. In general, blockchain solutions that allow users to exchange credits in a privacy-preserving manner are an active research field -- see also, e.g., \cite{gai2019privacy,mollah2020blockchain}.
These works are orthogonal to ours as they do not consider data sharing incentives, as the incentive to participate in their frameworks is to sell energy credits. A detailed overview of blockchain solutions applied to different  smart  energy  systems and applications can be found in \cite{hassan2019blockchain}.

\section{Conclusions \& Discussion}
\label{sec:conclusion}
In this paper, we have presented a blockchain solution that incentivizes operators to share data to detect FDI attacks. Our
solution provides security in settings where an individual operator is not able to detect attacks due to a
lack of redundancy. We have presented a formal analysis of our incentive mechanism, and shown that
operators are motivated to share data of as many meters as possible as long as the meter's error probability is
sufficiently low. Finally, we have presented four optimizations that allow our solution to be applied to
realistic grids with thousands of meters. 

Although the blockchain nodes may introduce a new attack vector, we note that an FDI attacker would not be able to violate network safety or liveness unless they compromise more than a third of all operators in the consortium. However, we do require that the blockchain client run by the operator is trusted, e.g., if the client does not validate the signatures of new blocks then our data sharing solution would be ineffective because the FDI attacker would be able to arbitrarily distort the measurements of other operators. However, the security of blockchain clients is a generic challenge that is not specific to smart grids.  
Bugs in the smart contract may also introduce a new type of vulnerability, but the smart contract's code can be  audited by all consortium members. 

In future work, we aim to investigate whether we can replace the MSP in Hyperledger, which we use to expel nodes, with a dedicated encryption mechanism that expels
nodes by resetting the encryption/decryption key among the remaining operators. Another interesting question is whether the asynchronous protocol in Hyperledger can be replaced with a more efficient bespoke algorithm for our context, e.g., using Byzantine broadcast \cite{cachin2011introduction} instead of distributed consensus. 
Finally, although we have focused on power grids, our results generalize to any other system in which multiple operators perform measurements that can be described using a system of linear equations, e.g., water or gas distribution networks.

\section*{Acknowledgments}
This research / project is supported by the National Research Foundation, Singapore, under its National Satellite of Excellence Programme ``Design Science and Technology for Secure Critical Infrastructure'' (Award Number: NSoE\_DeST-SCI2019-0009). Any opinions, findings and conclusions or recommendations expressed in this material are those of the author(s) and do not reflect the views of National Research Foundation, Singapore.
We also thank the anonymous reviewers of previous versions of this work for their insightful comments.

\bibliographystyle{abbrv}
\bibliography{ref}

\appendix
\label{sec:appendix}

\section{Construction of the Grid Topology Matrix}
\label{sec:grid_matrix}

In this section, we present more details about the construction of the grid topology matrix $G$.
The topology is represented using the bus-branch model \cite{monticelli1999state,abur2004power} that is commonly used in state estimation, which is the method that we use for anomaly detection. We only present a high-level overview of the model, and refer the interested reader to other works \cite{abur2004power,gomez2018electric,wood2013power,monticelli1999state} for a more in-depth discussion of power systems.

In our setting, buses represent nodes (e.g., power plants or transformer substations) whereas branches represent power lines. The sets $\owners$, $\nodes$, $\lines$, and $\meters$ are as defined in Section~\ref{sec:model}.
Recall that there are $N+1$ nodes/buses. Buses are represented by integers (via their IDs), so $\nodes \subset \N$.
There are $L$ lines/branches that connect the buses -- each line $l \in \lines$ can also be represented as a pair $(n,n')$ of buses (so by a pair of integers). Hence, $\lines \subset \N^2$.
In this definition, the lines are bidirectional: i.e., if $(n,n') \in \lines$, then $(n',n) \in \lines$ as well.
Let $\nb(n)$ be the set of \emph{neighbors} of $n$, i.e., 
$
\nb(n) = \{m \in \nodes : (n, m) \in \lines \}
$ 
The \emph{reactance} on line $l = (n,n')$ is denoted by $X_l = X_{nn'}$, such that $X_{nn'} = X_{n'n}$.
The power flow on line $(n,n')$ is denoted by $p_{nn'}$, such that \mbox{$p_{nn'} = -p_{n'n}$}.
The power injection at bus $n$ is denoted by $p_n$, and
the voltage phase angle at bus $n$ is denoted by $\theta_n$.
The last bus is selected as the \emph{reference bus}, so that \mbox{$\theta_{N+1} = 0$} and $\theta_i$, $i \in \{1,\ldots,N\}$ are expressed relative to the reference bus.

Although the relationship between the measurements and the phase angles at the buses is non-linear in an Alternating Current (AC) power system, it can be approximated using the linearized Direct Current (DC) power flow model. 
In particular, we obtain the following first-order approximation \cite{liu2014detecting,wood2013power} for the power flow on a single line: 
$
p_{nn'} = \frac{1}{X_{nn'}} (\theta_{n} - \theta_{n'}).
$
Using Kirchhoff's laws, we furthermore obtain that the power injected into bus $n$ equals the power outflow minus the inflow: 
$p_n = \sum_{n' \in \nb(n)} p_{nn'}$.
Let $x_{m}$ denote the measurement recorded by meter $m$. We assume that meter $m$'s measurement error is given by $\epsilon_{m}$, leading to the following equations:
$$
x_m = \left\{ \begin{array}{cl} 
p_{nn'} + \epsilon_{m} & \begin{array}{l}\text{if }m\text{ is a SCADA power} \\
\text{flow meter on line }(n,n'), \end{array} \\ 
p_{n} + \epsilon_m & \begin{array}{l}\text{if }m\text{ is a SCADA power} \\
\text{injection meter on bus }(n),\text{ and} \end{array} \\ 
\theta_{n} + \epsilon_m & \begin{array}{l}\text{if }m\text{ is a PMU for the voltage} \\
\text{phase angle on bus }(n). \end{array} 
\end{array}\right.
$$
The above equations can be combined into a single system of linear equations:
\begin{equation}
\vecx = \matb \vect + \vece,
\label{eq:main}
\end{equation}
which can be used for state estimation.
The topology matrix $\matb$ can then be constructed using the following algorithm:
\begin{enumerate}
\item Initialize $\matb$ as an $(M+N') \times N$ matrix filled with zeroes. Let $g_{mn}$ be its entry in the $m$th row and $n$th column.
\item For each bus injection meter $m$, let $n$ be the ID of its bus. For each $n' \in \nb(n)$, update \mbox{$g_{mn} \gets g_{mn} + \frac{1}{X_{nn'}}$} and \\ \mbox{$g_{mn'} \gets g_{mn'} - \frac{1}{X_{nn'}}$}.
\item For each power flow meter $m$, let $n$ be the ID of the first bus and $n'$ the ID of the second bus. Update \mbox{$g_{mn} \gets g_{mn} + \frac{1}{X_{nn'}}$} and \mbox{$g_{mn'} \gets g_{mn'} - \frac{1}{X_{nn'}}$}.
\item For each PMU $m$, let $n$ be the ID of its bus. Update \\ \mbox{$g_{mn} \gets g_{mn} + 1$}.
\item Zero-injection buses in $\nodes'$ are treated as bus injection meters that measure zero power, but which are given greater weight by multiplying the entries with a large constant $\largeconstant$. Let $n$ be the ID of its bus. For each $n' \in \nb(n)$, update \mbox{$g_{mn} \gets g_{mn} + \frac{\largeconstant}{X_{nn'}}$} and \mbox{$g_{mn'} \gets g_{mn'} - \frac{\largeconstant}{X_{nn'}}$}.
\end{enumerate}

\section{Threat Model Comparison}
\label{sec:threat_model_table}

In the earliest papers on FDI attack detection using state estimation \cite{bobba2010detecting,liu2011false}, the focus was on attackers who could either only compromise a fixed set of meters, or who had a maximum number $k$ of meters that they could compromise (but were free to choose which ones). This restriction was kept in some works \cite{yang2017optimal} and \cite{lakshminarayana2020data}, whereas in others, e.g., \cite{lakshminarayana2017optimal}, the attacker was instead restricted in terms of the total distortion that they could add to the measurements. Note that if an attacker has complete knowledge of the network topology \textit{and} the ability to change the measurements of \textit{all} meters to arbitrary values, then they are free to choose the measurements such that the residuals are zero.  Hence, any approach that uses state estimation to detect anomalies must assume that the attacker's capabilities are somehow limited. By contrast, we consider a system where the attacker is limited in terms of the number of OEMs that they can compromise, but an attack against a single OEM may affect all of a single operator's meters. An overview of these related threat models can be found in Table~\ref{tab:threat}.

\begin{table}[htp]
\caption{Overview of threat model restrictions in the literature.}
\begin{center}
\scalebox{0.8}{
\begin{tabular}{lc|cc}
& & \multicolumn{2}{c}{threat model} \\ 
& & limited & limited \\
& & $\#$ meters & total deviation \\ \toprule
Bobba et al. (2010) & \cite{bobba2010detecting} & $\checkmark$ & \\
Liu et al. (2011) & \cite{liu2011false} & $\checkmark$ & \\
Yang et al. (2017) & \cite{yang2017optimal} & $\checkmark$ & \\
Lakshminarayana et al. (2017) & \cite{lakshminarayana2017optimal} & & $\checkmark$\\
Lakshminarayana et al. (2020) & \cite{lakshminarayana2020data} & $\checkmark$ & \\
\end{tabular}
}
\end{center}
\label{tab:threat}
\end{table}%

\section{Numerical Example of the Evolution of Operators' Credits}
\label{sec:example}

To illustrate the evolution of the credits as discussed in Section~\ref{sec:incentives}, we consider a parameter setting based on Germany, where as of 2015 4 TSOs respectively controlled roughly $31.6\%$, $30.8\%$, $28.1\%$, and $9.5\%$ of the grid \cite{deloitte}. We choose the following parameters for our benchmark setting: $M = 1000$, $\initdeposit = 10^{14}$, $\reward = 10^6$, and $\misspen = \anompen = 10^{10}$. Here, $\initdeposit$ is chosen to be a large integer to avoid rounding errors, because the deposits are implemented as integers in Section~\ref{sec:implementation}. Furthermore, $\reward$ and $\misspen$ are chosen to ensure that it is profitable to add any meter whose probability of being offline in any slot is below $10^{-4}$. We also assume that all meters have the same error probabilities, namely $\missprob_m = 10^{-5}$ and $\anomprob_m = 0$ for all $m$.

\begin{figure}[h!]
     \centering
     \subfloat[][]{\includegraphics[width=.24\textwidth]{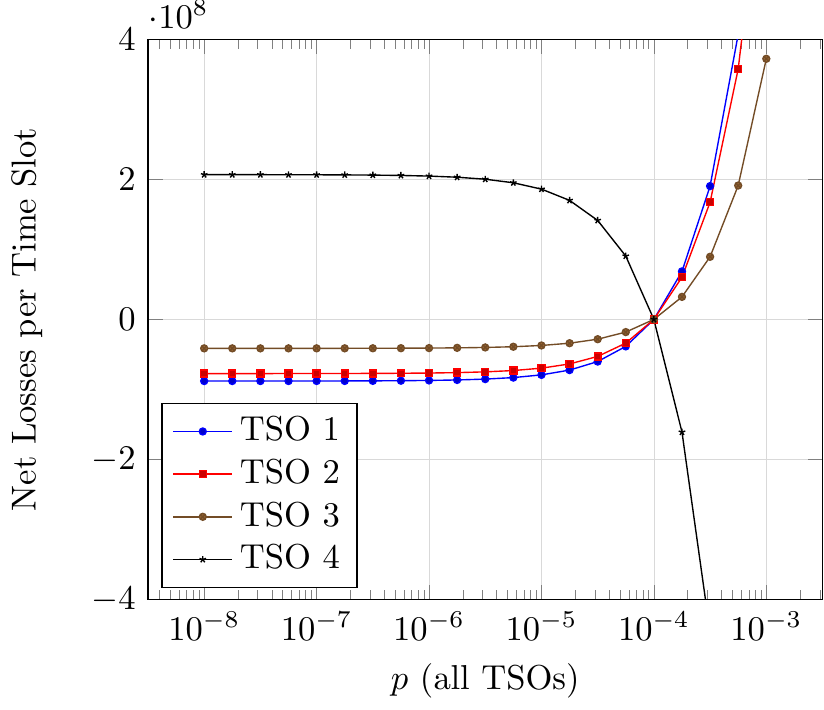}  
     \label{fig:plot1}}
     \subfloat[][]{\includegraphics[width=.24\textwidth]{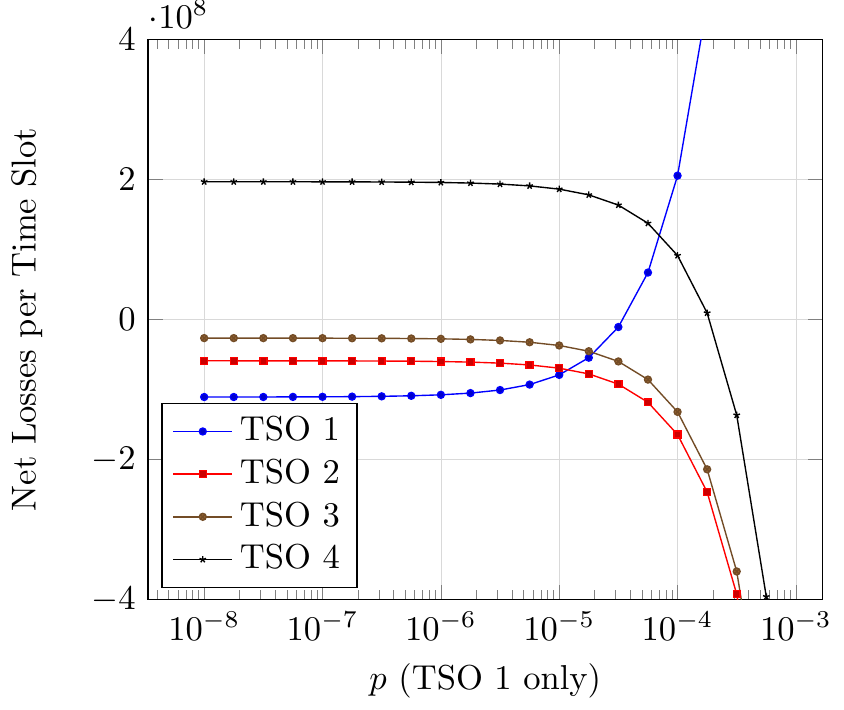}  
     \label{fig:plot2}}
     \caption{Net losses as a function of the probability of missing a measurement $p$. In Figure a), we choose $\misspen_m = p$ the same for all meters $m$ of the four TSOs. In Figure b), we only choose $\misspen_m = p$ if $m$ belongs to TSO 1, and $\misspen_m = 10^{-5}$ otherwise.}
     \label{fig:incentive_plots}
\end{figure}

In Figure~\ref{fig:plot1}, we display the net \textit{losses} per time slot for the four TSOs. At $\missprob_m = 10^{-5}$, we see that the smallest TSO loses roughly $2\cdot 10^{8}$ credits per time slot. This is divided almost equally among the other TSOs. With $\initdeposit = 10^{14}$, this means that the smallest TSO would run out of credits after $10^6$ time slots - if one time slot lasts 10 seconds, then this would be once every 115 days. The probability threshold $\zeta$ is close to $10^{-4}$. Once this threshold is crossed, meters become a liability and the smallest TSO is the only one that makes a profit.

In Figure~\ref{fig:plot2}, we keep $p_m$ constant at $10^{-5}$ for the meters of the last three TSOs, but vary the $\missprob$ for the meters of TSO 1. We see that the profits of TSO 1 approach a maximum of around $1.108\cdot10^{8}$ credits per time slot as $p_m$ decreases. However, if $p_m$ increases then the costs of TSO 1 eventually start to outweigh the gains. Eventually, the losses of TSO 1 become so high that even TSO 4 starts to make an expected profit each time slot -- however, at this point $p_m$ is higher than $\zeta$, which means that TSO 1 is better off joining the framework with a single meter.

\section{Incentive Compatibility Theorems}
\label{sec:theorems}

In this section, we prove the three theorems mentioned in Section~\ref{sec:incentive_properties}. We study the rewards of an operator $o \in \owners$ in a single time slot as a function of $o$'s actions. We first introduce five terms that determine the expected total change in $o$'s credits per time slot. These terms depend on $R_m = (1 - \missprob_{m}) \reward$ and $\Phi_m = \missprob_{m} \misspen$, which are the expected reward and penalty in each time slot for measurements produced and missed by meter $m$, respectively. 

The first term is the total expected reward for measurements produced by $o$'s meters, given by
$\rtot = \sum_{m \in \meters_o} R_m$.
The second term is the expected penalty for the missed measurements, given by 
$\phitot = \sum_{m \in \meters_o} \Phi_m$.
The third term represents the total losses from the rewards for other operators' meters, given by $\roth = \frac{1}{O-1} \sum_{m \in \meters \backslash \meters_o} R_m$. The fourth term represents the total gains from the missing measurement penalties for other operators, given by $\phioth = \frac{1}{O-1} \sum_{m \in \meters\backslash\meters_o} \Phi_m$. The fifth term represents the anomaly penalties and equals $\phip = \othersprob \sum_{m \in \meters_o}  \Phi'_m $, where $\othersprob$ equals the probability that $r > \anomthres$ and that every meter produced a measurement, i.e., $\prod_{m \in \meters} (1-\missprob_{m})$. The total expected change to $o$'s credits after each time slot is then given by:
\begin{equation}
\rnet = \rtot - \phitot - \roth + \phioth - \phip.
\label{eq:five_terms_raw}
\end{equation}
Incentive compatibility then means that given a set of possible actions, $o$ maximizes $\rnet$ by choosing the action recommended by the protocol. If this choice does not depend on the actions of the other operators, then it is a \textit{Nash equilibrium} for all operators to follow the protocol.

For the first theorem, we consider the impact of registering meters or not on $o$'s profits. Let $\iadd_m$ be equal to $1$ if meter $m$ has been added by its operator and $0$ otherwise. Let $\iiadd = (\iadd_m)_{m \in \meters}$ be the complete \textit{strategy profile} that describes for each meter whether it is added or not. The probability $\othersprob$ depends on which meters have been added or not, so it is a function of $\iiadd$. Similarly, the net gains $\rnet$ and its five constituent terms are functions of $\iiadd$.

For any $\iiadd \in \{0,1\}^{M}$, let $\iiadd_{m,0}$ and $\iiadd_{m,1}$ be equal to $\iiadd$ except that $\i_{m}$ has been set equal to $0$ or $1$, respectively. Incentive compatibility is then equivalent to the assertion that, for each $\iiadd$ and suitable $m$, it holds that $R_o^{\textsc{net}}(\iiadd_{m,1}) > R_o^{\textsc{net}}(I_{m,0})$ -- i.e., it is profitable for \textit{each} operator to add all suitable meters regardless of which meters are added by the other operators. As a consequence, all operators adding all suitable meters would be a Nash equilibrium because each operator would gain less by deviating. In Theorem~\ref{thm:addmeters}, we prove that as long as $\Delta_m(\iiadd) = \phip(\iiadd_{m,1}) - \phip(\iiadd_{m,0})$ is small, then it is profitable for all operators to add each meter $m$ for which $\misspen < \zeta$ for a given bound $\zeta$.
\begin{theorem}
For any meter $m$ such that $\missprob_m < \zeta_m(\iiadd)$, with 
$$
\zeta_m(\iiadd) \triangleq \frac{\reward}{\reward + \misspen} - \Delta_m(\iiadd),
$$
and for any $\iiadd \in \{0,1\}^{M}$ it holds that $R_o^{\textsc{net}}(\iiadd_{m,1}) > R_o^{\textsc{net}}(\iiadd_{m,0})$, where $\Delta_m(\iiadd) = \phip(\iiadd_{m,1}) - \phip(\iiadd_{m,0})$.
\label{thm:addmeters}
\end{theorem}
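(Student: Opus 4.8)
The plan is to exploit the additive structure of the five-term decomposition in \eqref{eq:five_terms_raw} and to observe that toggling a single coordinate $\iadd_m$ of the strategy profile $\iiadd$ changes only those terms that involve $o$'s own meter set $\meters_o$. Since $m \in \meters_o$, flipping $\iadd_m$ from $0$ to $1$ leaves $\roth$ and $\phioth$ untouched, because these sums range over $\meters \backslash \meters_o$ and therefore never contain $m$. Hence the comparison $\rnet(\iiadd_{m,1})$ versus $\rnet(\iiadd_{m,0})$ collapses to the three terms $\rtot$, $\phitot$, and $\phip$.

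First I would compute the incremental effect of adding $m$ on each of these three terms. The term $\rtot$ gains exactly $R_m = (1 - \missprob_m)\reward$ and $\phitot$ gains exactly $\Phi_m = \missprob_m \misspen$; both are single-summand contributions that follow directly from the definitions. The anomaly term is left abstract as $\Delta_m(\iiadd) = \phip(\iiadd_{m,1}) - \phip(\iiadd_{m,0})$, exactly as in the statement. Using the signs in \eqref{eq:five_terms_raw} ($\rtot$ added, $\phitot$ and $\phip$ subtracted), I would then record the identity
\begin{equation}
\rnet(\iiadd_{m,1}) - \rnet(\iiadd_{m,0}) = R_m - \Phi_m - \Delta_m(\iiadd).
\label{eq:thm1diff}
\end{equation}

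Substituting $R_m = (1-\missprob_m)\reward$ and $\Phi_m = \missprob_m \misspen$ into \eqref{eq:thm1diff}, requiring the right-hand side to be strictly positive, and solving the resulting linear inequality for $\missprob_m$ produces the threshold $\zeta_m(\iiadd)$. In particular, when $\Delta_m(\iiadd) = 0$ the bound reduces to $\reward/(\reward + \misspen) = (1 + \misspen/\reward)^{-1}$, which recovers Corollary~\ref{cor:addmeters}. Since \eqref{eq:thm1diff} holds for every $\iiadd \in \{0,1\}^{M}$ and every qualifying $m$, the profitability of adding $m$ does not depend on the other operators' choices, so the Nash-equilibrium claim follows immediately.

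The main obstacle is not the algebra but the careful bookkeeping of how adding $m$ propagates into $\phip = \othersprob \sum_{m' \in \meters_o} \Phi'_{m'}$: both the global factor $\othersprob = \prod_{m \in \meters}(1-\missprob_m)$ and the index set of the sum change when $m$ is registered, and one must verify that \emph{every} such channel is absorbed into the single quantity $\Delta_m(\iiadd)$ while confirming that no analogous change leaks into $\roth$ or $\phioth$. A secondary point requiring attention is the precise placement of the $(\reward + \misspen)$ normalization on the $\Delta_m$ term when matching the derived threshold to the stated form of $\zeta_m(\iiadd)$; once the invariance of the ``other operator'' terms is pinned down, the remainder is elementary manipulation.
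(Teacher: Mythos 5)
Your proposal follows essentially the same route as the paper's proof: the same five-term decomposition, the observation that only $\rtot$, $\phitot$, and $\phip$ change when $\iadd_m$ is toggled (the third and fourth terms contribute zero), and the same linear inequality in $\missprob_m$. The normalization subtlety you flag is real --- solving $R_m - \Phi_m - \Delta_m(\iiadd) > 0$ actually yields $\missprob_m < (\reward - \Delta_m(\iiadd))/(\reward + \misspen)$, whereas the paper silently rewrites this as the stated $\zeta_m(\iiadd)$ without dividing $\Delta_m(\iiadd)$ by $\reward + \misspen$ --- but that is an imprecision in the paper's own derivation rather than a gap in yours.
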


\begin{proof}
From \eqref{eq:five_terms_raw}, we observe that $\rnet(\iiadd_{m,1}) - \rnet(\iiadd_{m,0})$ can be decomposed into five terms as follows:
\begin{equation}
\begin{split}
\rnet(\iiadd_{m,1}) - \rnet(\iiadd_{m,0}) & = \rtot(\iiadd_{m,1}) - \rtot(\iiadd_{m,0}) \\
& - (\phitot(\iiadd_{m,1}) - \phitot(\iiadd_{m,0})) \\
& - (\roth(\iiadd_{m,1}) - \roth(\iiadd_{m,0})) \\
& + \phioth(\iiadd_{m,1}) - \phioth(\iiadd_{m,0}) \\
& - (\phip(\iiadd_{m,1}) - \phip(\iiadd_{m,0}))
\end{split}
\label{eq:five_terms}
\end{equation}
From the definition of these terms, we find that the first term equals $R_m$, the second term equals $\Phi_m$, and the third and fourth terms equal $0$. The fifth term represents the net effect of: 1) a higher likelihood that at least one measurement is missing, and 2) the impact on the residuals and therefore $r$ and $\anompen_m$. The latter effect depends on the place of the new meter in the grid and the degree to which the meter agrees with related meters. Whether this term, which equals $\Delta_m(\iiadd)$, is positive or negative therefore depends on the setting.

If we combine the above, we find that $\rnet$ is positive if and only if \mbox{$R_m - \Phi_m - \Delta_m(\iiadd) > 0$}, i.e.,
$$
\reward - \missprob_m \reward - \missprob_m \misspen - \Delta_{m}(\iiadd) > 0.
$$
which can be rewritten as
$$
 \missprob_m < \frac{\reward}{\reward + \misspen} - \Delta_{m}(\iiadd).
$$
 This proves the theorem.
\end{proof}

For the next theorem, we assume that it is known which meters have been registered or not, which are offline or not, and what the residuals would be based on the readings of all meters. We investigate the impact on the profits of an operator $o$ when the meters in the sets $\meters' = \meters'_o \cup \meters'_{\neg o}$ are withheld, such that \mbox{$\meters'_o \subset \meters_o$} and \mbox{$\meters'_{\neg o} \subset \meters \backslash \meters_o$}. We assume that the measurements in $\meters'$ would otherwise have made it onto the blockchain in time. Let $M'_o$ and $M'_{\neg o}$ be the number of elements in $\meters'_o$ and $\meters'_{\neg o}$, respectively. We assume that $o$ is aware which other meters have shared their measurements and what the values of these measurements are. This represents the situation where a single operator $o$ has the choice to withhold or block the measurements of the meters in $\meters'$ right before the end of the timeout period. We make these assumptions because we want to make the situation as advantageous as possible for the adversary for the corollaries that we derive from Theorem~\ref{thm:addmeasurements}.

In the following, let $\imiss_m$ be equal to $1$ if the measurement of meter $m$ is missed and $0$ if it is not. In this setting, the strategy profile $\iimiss$ is given by $(\imiss_m)_{m \in \meters}$. 
Let $\iimiss_{\meters',0}$ and $\iimiss_{\meters',1}$ be equal to $\iimiss$ expect that for all $m \in \meters'$, $\imiss_m$ has been set equal to $0$ or $1$, respectively. Since we assume that $o$ knows whether any meter has missed its measurement, it holds that $q = 0$ if at least one measurement is missing, and $q=1$ otherwise.  
Let \mbox{$\Delta_{\meters'}'(\iimiss) = \rnet(\iimiss_{\meters',0}) - \rnet(\iimiss_{\meters',1})$} be $o$'s profit from withholding the measurements of the meters in $\meters'$.
We can then prove the following statement.

\begin{theorem}

Let $\delta = \left(\frac{M'_{\neg o}}{O-1} - M'_o\right) \cdot (\reward + \misspen)$. Let ${\bf 1}_{M'}$  be equal to $1$ if $M'_o + M'_{\neg o}$ and to $0$ otherwise. Then
$$
\Delta'(\iimiss_{\meters'}) = \delta + q {\bf 1}_{M'} \sum_{m \in \meters_o} \Phi'_m.
$$

\label{thm:addmeasurements}
\end{theorem}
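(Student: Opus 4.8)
The plan is to compute the stated profit directly as a difference of two \emph{realized} per-slot credit outcomes rather than an expectation. Because the theorem conditions on a fully known state---which meters are registered, which are online, and what every residual would be from the reported values---the credit change in each configuration is deterministic. I would therefore evaluate $o$'s credit change both in the configuration $\iimiss_{\meters',0}$ where all of $\meters'$ report and in $\iimiss_{\meters',1}$ where all of $\meters'$ are withheld, and take the ``withhold minus report'' difference, i.e.\ $o$'s profit from withholding. Since the two configurations agree on every meter outside $\meters' = \meters'_o \cup \meters'_{\neg o}$, all contributions from those meters cancel, so the computation reduces to bookkeeping over the meters in $\meters'_o$ and $\meters'_{\neg o}$ plus a single anomaly term.

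First I would handle the linear reward/penalty part using the two redistribution rules from Section~\ref{sec:mechanism}: a reporting meter earns its operator $\reward$ and costs each of the other $O-1$ operators $\tfrac{\reward}{O-1}$, while a missing meter costs its operator $\misspen$ and pays each other operator $\tfrac{\misspen}{O-1}$. Withholding one of $o$'s own meters thus swings $o$'s credit from $+\reward$ to $-\misspen$, a change of $-(\reward+\misspen)$ per meter, for a total of $-M'_o(\reward+\misspen)$. Blocking one of another operator's meters swings $o$'s share from $-\tfrac{\reward}{O-1}$ to $+\tfrac{\misspen}{O-1}$, a change of $+\tfrac{\reward+\misspen}{O-1}$ per meter, for a total of $+\tfrac{M'_{\neg o}}{O-1}(\reward+\misspen)$. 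Summing these two contributions yields exactly $\delta = \left(\tfrac{M'_{\neg o}}{O-1} - M'_o\right)(\reward+\misspen)$.

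Next I would treat the anomaly-penalty part, which is where the factors $q$ and ${\bf 1}_{M'}$ enter and which I expect to be the delicate step. The key observation is that the penalties $\Phi'_m$ of \eqref{eq:penalty} are administered only when \emph{every} meter reports and $r > \anomthres$. Hence in the withholding configuration, provided at least one meter is genuinely withheld (${\bf 1}_{M'}=1$), a measurement is missing, the anomaly routine is suppressed, and $o$'s anomaly contribution is $0$. In the reporting configuration the meters of $\meters'$ are present, so the routine triggers exactly when the \emph{remaining} meters are also all present and $r$ exceeds $\anomthres$---precisely the event encoded by $q$---in which case $o$'s anomaly contribution is $-\sum_{m\in\meters_o}\Phi'_m$. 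Taking the withhold-minus-report difference therefore gives $+\,q\,{\bf 1}_{M'}\sum_{m\in\meters_o}\Phi'_m$; and when $M'=0$ the two configurations coincide, so ${\bf 1}_{M'}=0$ correctly annihilates the term. Adding this to $\delta$ produces the claimed identity.

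The main obstacle is keeping the signs consistent throughout---in particular that $\Phi'_m$ may be positive or negative, so that $o$ can genuinely \emph{profit} from suppressing the anomaly round, and that the residuals and the $\Phi'_m$ appearing in the formula are those computed from the reported values in $\iimiss_{\meters',0}$, which $o$ can evaluate by the standing assumption that it knows the other meters' shared values. I would also make the boundary cases explicit ($M'_o=0$, or $M'_{\neg o}=0$ so that ${\bf 1}_{M'}$ depends only on $M'_o$), since the corollaries in Section~\ref{sec:incentive_properties} invoke exactly those specializations when reasoning about withholding versus blocking.
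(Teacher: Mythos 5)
Your proposal is correct and follows essentially the same route as the paper's proof: both arguments decompose $o$'s per-slot credit change into the reward/penalty swing of $-(\reward+\misspen)$ for each withheld own meter, the redistribution swing of $+\tfrac{\reward+\misspen}{O-1}$ for each blocked foreign meter, and an anomaly term that survives only when no measurement is withheld (gated by ${\bf 1}_{M'}$) and all other meters report with $r>\anomthres$ (gated by $q$). Your per-meter "swing" bookkeeping is just a regrouping of the paper's five-term decomposition of \eqref{eq:five_terms}, and your sign conventions match the intended reading of $\Delta'$ as withhold-minus-report.
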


\begin{proof}
We can decompose $\rnet$ into five terms as in \eqref{eq:five_terms}, except that each quantity depends on $\iimiss_{\meters',0}$ and $\iimiss_{\meters',1}$ instead of $\iimiss_{m,0}$ and $\iimiss_{m,1}$. From the definition of $\rtot$, we find that the first term equals $- M_o' \reward$ as $o$ misses out on the rewards for the $M_o'$ meters in $\meters_o'$. The second term equals $- M_{o}' \misspen$, as the missed measurement penalty $\misspen$ is incurred an additional $M'_o$ times. The third equals $\frac{M'_{\neg o}}{O-1} \reward$ as $o$ loses fewer credits given as a reward to the other operators. Similarly, the fourth term equals $\frac{M'_{\neg o}}{O-1} \misspen$ as $o$ receives a fraction $\frac{1}{O-1}$ of the redistributed credits from the penalties incurred by the operators of the meters in $\meters'_{\neg o}$. The fifth term either equals $\sum_{m \in \meters_o}  \Phi'_m$ if $M'_o + M'_{\neg o} > 0$, as the anomaly penalty is no longer applied, and $0$ otherwise. By substituting these results into \eqref{eq:five_terms}, we find that 
$$
\Delta_{\meters'}'(\iimiss) = \left(\frac{M'_{\neg o}}{O-1} - M'_o\right) \cdot (\reward + \misspen) + q {\bf 1}_{M'} \sum_{m \in \meters_o}  \Phi'_m
$$
where ${\bf 1}_{M'}$ is as defined in the statement of the theorem. This completes the proof.

\end{proof}

In the final theorem, we derive an expression for the effect of distorting measurements on $o$'s profits. This does not have an impact on first four terms of \eqref{eq:five_terms_raw}, so we focus on $\phip$. If at least one measurement is missing, then the anomaly penalty is not applied at all, so we focus on the case where all meters have reported their measurements. As discussed in Section~\ref{sec:analysis}, $\vec{x}$ denotes the vector of ``true'' measurements, $\distortmeters$ the set of meters whose measurements can be arbitrarily distorted by $o$, and $\vec{d}$ is a $M \times 1$ vector such that $d_m$ is the distortion applied to meter $m$. In this setting, the residuals are a function of $\vec{d}$, i.e.,
$\hat{\epsilon}(\vec{d}) = \maker(\vec{x} + \vec{d})$,
where $\maker = I_{M+N'} - P$ where $I_{M+N'}$ is the identity matrix of size $I_{M+N'}$. In regression analysis, the matrix $\maker$ is often called the ``residual maker'' matrix, and we denote its $i,j$th entry by $\makerentry_{ij}$. The sum $r$ and the total anomaly penalty $\phip$ are also functions of $\vec{d}$, i.e.,
$$
\phip(\vec{d}) = \sum_{m \in \meters_o} \frac{\anompen}{r(\vec{d})}\left(\epsilon^2_{m}(\vec{d}) - \frac{r(\vec{d})}{M}\right).
$$
The total impact of adding distortions $\vec{d}$ on the anomaly penalty is then given by $\Phi'_{o}(\vec{d}) - \Phi'_{o}(\vec{0})$, where $\vec{0}$ is an $M \times 1$ vector of zeroes. The impact of adding distortions $\vec{d}$ on $o$'s net profits per time slot are positive if and only if
$\Phi'_{o}(\vec{d}) - \Phi'_{o}(\vec{0}) < 0$, i.e., if the anomaly penalties would be reduced. In Theorem~\ref{thm:distortmeasurements} we state a necessary and sufficient condition for this to hold. The theorem depends on the matrix $A$, given by 
$$
A = \begin{pmatrix} A_{dd} & A_{dx} \\ A_{xd} & {\bf 0}\end{pmatrix},
$$
where $A_{dd}$, $A_{dx}$, $A_{xd}$, and $\bf 0$ are $M \times M$ matrices. Each entry of $\bf 0$ equals zero, and the $k,l$th entry of $A_{dd}$, $A_{dx}$, and $A_{xd}$, denoted respectively by $a_{dd,kl}$, $a_{dx,kl}$, and $a_{xd,kl}$, are given by
$$
a_{dd,kl} = a_{dx,kl} = a_{xd,kl} = \sum_{i \in \meters_o} \left( \makerentry_{ik} \makerentry_{il} - \sum_{j \in \meters} \frac{\makerentry_{jk} \makerentry_{jl}}{M}\right).
$$
Finally, let $\vec{y} = \begin{pmatrix} \vec{d} \\ \vec{x}\end{pmatrix}$ be a $2M \times 1$ vector with the entries of $\vec{d}$ and $\vec{x}$.
\begin{theorem}
With $\vec{y}$ and $A$ as above, $\Phi'_{o}(\vec{d}) - \Phi'_{o}(\vec{0}) < 0$ if and only if
$$
\vec{y}^{\,T} A \vec{y} < 0.
$$
\label{thm:distortmeasurements}
\end{theorem}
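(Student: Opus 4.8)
The plan is to rewrite the anomaly penalty $\Phi'_o(\vec{d})$ as a ratio of quadratic forms in the distorted measurement vector $\vec{z} = \vec{x} + \vec{d}$, and then to argue that the \emph{sign} of the profitability condition $\Phi'_o(\vec{d}) - \Phi'_o(\vec{0}) < 0$ is controlled by the single quadratic form $\vec{y}^{\,T} A \vec{y}$. First I would substitute $\hat{\epsilon}(\vec{d}) = \maker(\vec{x} + \vec{d})$ into the penalty. Because both the true pseudo-measurements and the distortions vanish on the $N'$ zero-injection coordinates, only the first $M$ columns of $\maker$ contribute, so $\epsilon_i(\vec{d}) = \sum_{k=1}^{M} \makerentry_{ik} z_k$ for each meter $i$, with $z_k = x_k + d_k$. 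Writing $S_o(\vec{d}) = \sum_{i \in \meters_o} \epsilon_i^2(\vec{d})$ and $r(\vec{d}) = \sum_{j \in \meters} \epsilon_j^2(\vec{d})$, and pulling the constant $\frac{r(\vec{d})}{M}$ through the sum over $\meters_o$, the penalty collapses to
\[
\Phi'_o(\vec{d}) = \frac{\anompen}{r(\vec{d})}\left(S_o(\vec{d}) - \frac{M_o}{M}\, r(\vec{d})\right).
\]

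Next I would expand the numerator into a quadratic form. Substituting the expression for $\epsilon_i(\vec{d})$ and interchanging the order of summation rewrites $S_o(\vec{d}) - \frac{M_o}{M} r(\vec{d})$ as $\sum_{k,l} z_k z_l\, a_{kl}$, and a short calculation shows that the coefficients $a_{kl} = \sum_{i \in \meters_o}\left(\makerentry_{ik}\makerentry_{il} - \sum_{j \in \meters}\frac{\makerentry_{jk}\makerentry_{jl}}{M}\right)$ are exactly the common entries of the blocks $A_{dd}$, $A_{dx}$, and $A_{xd}$. Hence the numerator equals $\vec{z}^{\,T} \tilde{A} \vec{z}$, where $\tilde{A}$ is the symmetric $M \times M$ matrix carrying these entries, so that $\Phi'_o(\vec{d}) = \anompen\, \vec{z}^{\,T} \tilde{A} \vec{z} / r(\vec{d})$.

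I would then connect this to $\vec{y}^{\,T} A \vec{y}$ through the block structure of $A$. Using the symmetry of $\tilde{A}$, a direct expansion gives
\[
\vec{y}^{\,T} A \vec{y} = \vec{d}^{\,T} \tilde{A} \vec{d} + 2\,\vec{d}^{\,T} \tilde{A} \vec{x} = \vec{z}^{\,T} \tilde{A} \vec{z} - \vec{x}^{\,T} \tilde{A} \vec{x},
\]
where the second equality is just the expansion of $(\vec{x}+\vec{d})^{T}\tilde{A}(\vec{x}+\vec{d})$. Thus $\vec{y}^{\,T} A \vec{y}$ is precisely the change in the penalty's numerator induced by the distortion.

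Finally I would assemble the equivalence. Since $\anompen > 0$ and $r(\vec{d}) > 0$ whenever the penalty is triggered, $\Phi'_o(\vec{d})$ shares the sign of $\vec{z}^{\,T} \tilde{A} \vec{z}$. The main obstacle is the denominator: in general $r(\vec{d}) \neq r(\vec{0})$, so $\Phi'_o(\vec{d}) - \Phi'_o(\vec{0})$ is a difference of two ratios rather than a single quadratic form, and the stated iff holds exactly only when the baseline numerator $\vec{x}^{\,T} \tilde{A} \vec{x}$ vanishes. This is exactly the regime isolated by the ``no trivial attack'' constraint $P\vec{x} - \vec{x} = 0$ of \eqref{eq:qp}: there the baseline residuals are zero, so no anomaly penalty is incurred without the attack ($\Phi'_o(\vec{0}) = 0$) and $\vec{x}^{\,T} \tilde{A} \vec{x} = 0$. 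In that regime $\Phi'_o(\vec{d}) - \Phi'_o(\vec{0}) = \anompen\, \vec{z}^{\,T} \tilde{A} \vec{z}/r(\vec{d})$, whose sign equals that of $\vec{z}^{\,T} \tilde{A} \vec{z} - \vec{x}^{\,T} \tilde{A} \vec{x} = \vec{y}^{\,T} A \vec{y}$, yielding the claim. I expect making this baseline assumption explicit — and thereby reducing the two-ratio comparison to the one-quadratic-form condition — to be the crux of the argument; the surrounding manipulations with the residual-maker matrix are routine bookkeeping.
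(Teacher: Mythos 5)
Your algebraic core coincides with the paper's: the paper likewise expands $\epsilon^2_{m}(\vec{d}) - \epsilon^2_{m}(\vec{0})$ through the residual-maker matrix $\maker$, collects the $d_k d_l$, $d_k x_l$, $x_k d_l$ cross terms into the common coefficients $a_{dd,kl}=a_{dx,kl}=a_{xd,kl}$, and identifies $\vec{y}^{\,T}A\vec{y}$ with the resulting quadratic form; your identity $\vec{y}^{\,T}A\vec{y}=\vec{z}^{\,T}\tilde{A}\vec{z}-\vec{x}^{\,T}\tilde{A}\vec{x}$ is a clean repackaging of exactly that computation. Where you genuinely depart is at the first step. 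The paper's proof opens by asserting that $\Phi'_{o}(\vec{d}) - \Phi'_{o}(\vec{0}) < 0$ is \emph{equivalent} to \eqref{eq:rewards_full}, which silently cancels the prefactors $\anompen/r(\vec{d})$ and $\anompen/r(\vec{0})$ as if they were equal. As you observe, they are not equal in general, and a difference of two ratios need not share the sign of the difference of their numerators: writing $N(\vec{d})$ for the un-normalized numerator $\sum_{m\in\meters_o}\bigl(\epsilon^2_m(\vec{d})-r(\vec{d})/M\bigr)$, taking $N(\vec{0})=r(\vec{0})=1$, $N(\vec{d})=1.5$, $r(\vec{d})=10$ gives $N(\vec{d})-N(\vec{0})>0$ while the penalty difference is negative. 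So the unconditional ``if and only if'' does not follow from the displayed algebra; it does hold in the regime you isolate ($P\vec{x}=\vec{x}$, hence $\Phi'_{o}(\vec{0})=0$ and $\vec{x}^{\,T}\tilde{A}\vec{x}=0$), which is precisely the regime in which Corollary~\ref{cor:distort} and the QP \eqref{eq:qp} are actually applied. Your explicit restriction is therefore not a weakness of your argument but a repair of the paper's, and making it explicit is the right call.

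Two minor points. First, the comparison also presupposes that the anomaly penalty is triggered in the distorted scenario, i.e. $r(\vec{d})>\anomthres$; otherwise $\Phi'_{o}(\vec{d})=0$ regardless of the sign of the quadratic form, a caveat neither you nor the paper states but which you at least gesture at with ``whenever the penalty is triggered.'' Second, your remark that only the first $M$ columns of $\maker$ contribute (because the distortions and the zero-injection pseudo-measurements vanish on the last $N'$ coordinates) quietly corrects the paper's summation index $k\in\meters_o$ in the expansion of $\epsilon^2_m(\vec{d})$, which should run over all of $\meters$.
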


\begin{proof}
First note that $\Phi'_{o}(\vec{d}) - \Phi'_{o}(\vec{0}) < 0$ is equivalent to saying that
\begin{equation}
 \sum_{m \in \meters_o}\left(\epsilon^2_{m}(\vec{d}) - \epsilon^2_{m}(\vec{0}) - \frac{1}{M} \left(r(\vec{d}) - r(\vec{0})\right)\right) < 0.
 \label{eq:rewards_full}
\end{equation}
Note that 
\begin{equation*}
\begin{split}
\epsilon^2_{m}(\vec{d}) & = \left(\sum_{k \in \meters_o} \makerentry_{mk} (x_k+d_k) \right)^2 \\ & = \sum_{k \in \meters_o}\sum_{l \in \meters_o}\makerentry_{mk} \makerentry_{ml} (x_k+d_k)(x_l+d_l)
\end{split}
\end{equation*}
so that
$$
\epsilon^2_{m}(\vec{d}) - \epsilon^2_{m}(\vec{0}) = \sum_{k \in \meters_o}\sum_{l \in \meters_o}\makerentry_{mk} \makerentry_{ml} (d_k d_l + d_k x_l + x_k d_l).
$$
Since $r(\vec{d}) = \sum_{k \in \meters} \epsilon^2_{k}(\vec{d})$, we can rewrite the left hand side of \eqref{eq:rewards_full} as a summation of terms $d_k d_l$, $d_k x_l$, and $x_k d_l$ for all $k,l \in \meters$. In particular, we can reorganize to produce the following matrix $A$
\begin{center}
\setlength{\tabcolsep}{3pt}
\begin{tabular}{c|cccccc}
      & $d_1$ & $\ldots$ & $d_M$ & $x_1$ & $\ldots$ & $x_M$ \\ \hline
$d_1$ & $a_{dd,11}$ & $\ldots$ & $a_{dd,1M}$ & $a_{dx,11}$ & $\ldots$ & $a_{dx,1M}$ \\
$\vdots$ & $\vdots$ & $\ddots$ & $\vdots$ & $\vdots$ & $\ddots$ & $\vdots$ \\
$d_M$ & $a_{dd,M1}$ & $\ldots$ & $a_{dd,MM}$ & $a_{dx,M1}$ & $\ldots$ & $a_{dx,MM}$ \\
$x_1$ & $a_{dd,11}$ & $\ldots$ & $a_{dd,1M}$ & $0$ & $\ldots$ & $0$ \\
$\vdots$ & $\vdots$ & $\ddots$ & $\vdots$ & $\vdots$ & $\ddots$ & $\vdots$ \\
$x_M$ & $a_{dd,M1}$ & $\ldots$ & $a_{dd,MM}$ & $0$ & $\ldots$ & $0$ \\
\end{tabular}
\setlength{\tabcolsep}{6pt}
\end{center}
for which it holds that $\vec{y}^{\,T} A \vec{y}$ corresponds to the left hand side of \eqref{eq:rewards_full}. This proves the theorem.
\end{proof}

\section{Blockchain Security}
\label{sec:blockchain_security}

\newcommand{\timeslotremainder}{y}

As we discussed in Section~\ref{sec:incentives}, it is typically profitable for an operator to prevent another operator to upload its measurements to the blockchain. 
The process of uploading a measurement from a meter to the blockchain is depicted in Figure~\ref{fig:data_flow}. 
Each measurement is created and sent by the meter over the network to its operator's MDMS server, which broadcasts it to the peer-to-peer network. The measurement must then be added to a block before \textit{FinalizeTimeSlot} -- this occurs after $\timeslotremainder$ blocks, where $\timeslotremainder \in \{0, \ldots,\slotduration+\timeoutblocks-1\}$ represents the number of rounds left  in which the round leader is aware of the measurement until the end of the current period and the timeout duration. After the measurement has been included in a block, this block must receive the support of at least two thirds of the nodes to be added to the blockchain. In Section~\ref{sec:model} we assume that it is beyond the capabilities of operators to attack other operator's meters, MDMS servers, or the network (e.g., through a DoS attack).
\footnote{For the same reason, we do not consider attacks in which an operator changes the reading of another operator's meter to an anomalous reading.} 
However, operators are able to influence decisions at the consensus layer. In the following, we discuss two attacks in which an operator delays the inclusion of a competing operator's readings beyond $\timeslotremainder$ blocks.

The first attack involves an adversarial coalition that refuses to endorse unwanted blocks. Since each block requires the endorsement of at least $2f+1$ peers to be committed, adversarial nodes have the possibility to veto blocks if abstaining would mean that more than $f$ peers do not endorse the block. For example, if there are $4$ companies involved, of which $1$ is adversarial and $1$ is offline for more than $\timeslotremainder$ blocks, then the adversarial node can veto all blocks by the two honest nodes, and only the blocks by the adversarial node will be committed. The adversarial node has full control over which transactions it includes in its block, so it can leave out the measurements of its competitors. However, although operators can periodically go offline, if $O$ is large then it is unlikely that many operators would go offline simultaneously.

The second attack involves an operator who is the round leader during all $\timeslotremainder$ blocks. Whether this possible depends on the choice of consensus protocol. In PBFT \cite{castro1999practical}, round leaders remain in place the same unless a node requests a view change. In this case, it is the responsibility of the maligned operators to request a view change if they observe that their measurements are not being included. However, in Tendermint \cite{kwon2014tendermint} and HotStuff \cite{yin2019hotstuff} rotate the leader every round. As such, each operator has to wait at most $O-1$ blocks until it can propose a block, so this is safe as long as $O - 1 < \timeslotremainder$. 
As such, $\timeoutblocks$ and $\slotduration$ should be chosen such that $\timeslotremainder > O-1$ with high probability.

\begin{figure}[t]
\centering
\includegraphics[width=0.45\textwidth]{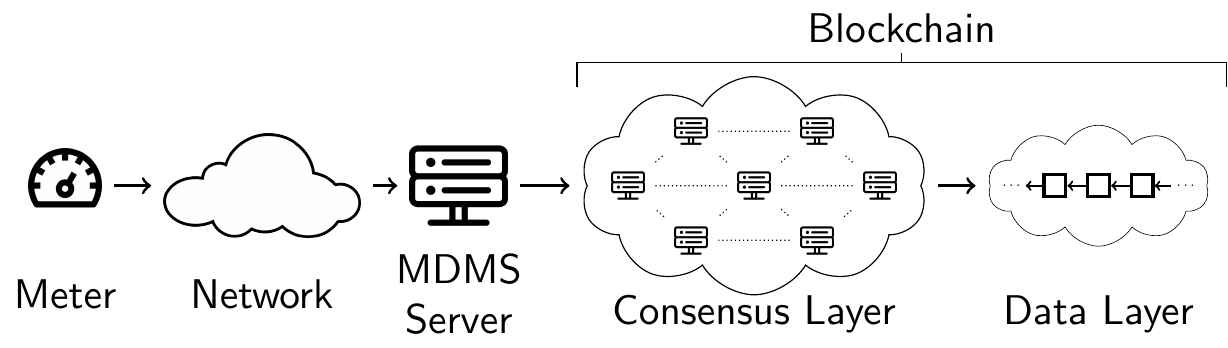}
\caption{Flowchart that represents the necessary stages for adding new measurements to the blockchain.}
\label{fig:data_flow}
\end{figure}

\end{document}